\title{Polynomial Kernels for Generalized Domination Problems}
\author{Pradeesha Ashok\inst{1} \and Rajath Rao\inst{1} \and Avi Tomar\inst{1}} 
\institute{International Institute of Information Technology Bangalore, India 
\email{\{ pradeesha, rajath.rao, avi.tomar\}@iiitb.ac.in}}
\begin{document}

\newcommand{\defproblem}[3]{
  \vspace{1mm}
\noindent\fbox{
  \begin{minipage}{0.96\textwidth}
  \begin{tabular*}{\textwidth}{@{\extracolsep{\fill}}lr} #1 \\ \end{tabular*}
  {\bf{Input:}} #2  \\
  {\bf{Question:}} #3
  \end{minipage}
  }
  \vspace{1mm}
}

\newcommand{\ptds}{\textsc{Perfect total dominating set }}
\newcommand{\pds}{\textsc{Perfect dominating set} }
\newcommand{\eds}{\textsc{Efficient dominating set} }
\newcommand{\clq}{\textsc{Clique} }
\newcommand{\dpoly}{\textsc{D-Polynomial Root CSP} }
\newcommand{\sirho}{\textsc{$[\sigma , \rho ]$ Dominating Set} }
\newcommand{\dmod}{{degree-$d$-modulator}}
\newcommand{\bsirho}{\textsc{Bounded $[\sigma , \rho ]$ Dominating Set}}
\newcommand{\sbsirho}{\textsc{Semibounded $[\sigma , \rho ]$ Dominating Set}}
\newcommand{\srds}{{$(\sigma,\rho)$- dominating set}}

\maketitle
\begin{abstract}
In this paper, we study the parameterized complexity of a generalized domination problem called the $\sirho$ problem. This problem generalizes a large number of problems including the Minimum Dominating Set problem and its many variants. The parameterized complexity of the $\sirho$ problem parameterized by treewidth is well studied. Here the properties of the sets $\sigma$ and $\rho$ that make the problem tractable are identified~\cite{van2009dynamic}. We consider a larger parameter and investigate the existence of polynomial sized kernels. When $\sigma$ and $\rho$ are finite, we identify the exact condition when the \sirho problem parameterized by vertex cover admits polynomial kernels. Our lower and upper bound results can also be extended to more general conditions and provably smaller parameters as well.
\keywords{Dominating Set \and Parametrized Complexity \and Kernelization \and Sigma Rho Domination}
\end{abstract}

\section{Introduction}
Domination problems is an important topic studied in the area of Graph Theory. A well studied domination problem is the $\textsc{Minimum Dominating Set}$ problem which is stated as follows: Given a graph $G(V,E)$, find $V' \subseteq  V$ such that every vertex $v \in V\setminus V'$ has at least one neighbor in $V'$ and $|V'|$ is minimized. This problem is known to be NP-hard~\cite{gareyJohnson} and has been extensively studied in the area of approximation algorithms and exact exponential algorithms. Several variants of this problem like the $\textsc{Independent Dominating Set}$ problem, the $\textsc{Connected Dominating Set}$ problem, the $\textsc{Efficient Dominating Set}$ problem, the $\textsc{k-Dominating Set}$ problem and the $\textsc{Total Dominating Set}$ problem are also well-studied. 

We study a generalized domination problem called the $\sirho$ problem that will generalize many of the above problems. Let $\sigma$ and $\rho$ be subsets of the set of natural numbers.\\
\newline\vspace{10pt}
\defproblem{\sirho}{Graph $G(V,E)$}{Does there exist $D \subseteq V$ such that for all $v \in D$, there exists $i \in \sigma$ such that $|N(v) \cap D|=i$ and for all $v \in V\setminus D$, there exists $j \in \rho$ such that $|N(v) \cap D|=j$?}


The \sirho problem, also referred to as the Locally Checkable Vertex Subset(LCVS) problem, was introduced by Telle et al.~\cite{telle1997algorithms} in the 1990s. The \sirho problem and many of its special cases are known to be NP-hard~\cite{bange1988efficient,bange1961,efficientDom,fellows1991perfect} and therefore are unlikely to admit polynomial time exact algorithms. Therefore, it is natural to consider the complexity of parameterized versions of these problems. The $\textsc{Minimum Dominating Set}$ problem parameterized by the solution size is a W[2]-hard problem and not believed to admit FPT algorithms~\cite{downey1995parameterized}. (Formal definitions from parameterized complexity are given later). The next possibility is to consider structural parameterizations. One of the popular structural parameterizations for a graph problem is using the treewidth of the input graph. van Rooji et al.~\cite{van2009dynamic} and Telle and Proskuroswski~\cite{telle1997algorithms} studied the parameterized complexity of the \sirho problem parameterized by treewidth and proved that FPT algorithms exist when both $\sigma$ and $\rho$ are either finite or co-finite. This result implies that many domination problems are FPT when parameterized by treewidth. In addition to this, various other parameterizations of the \sirho problem are considered~\cite{jaffke2018generalized}. 

\begin{table}[t]
\begin{center}
\begin{tabular}{ |p{2cm} |p{2cm}| p{3.5cm}| c |c| p{2cm}| }
\hline
$\sigma$ & $\rho$ & Standard Name & Vertex Cover & deg. modulator & Neighborhood diversity \\ 
 \hline
$\{0\}$ &$\{1\}$& Efficient Dominating set & Poly. kernel&No Poly Kernel & Poly. kernel\\
 \hline
$\mathbb{N}$ &$\{1\}$& Perfect Dominating set & No Poly Kernel& No Poly Kernel& Poly. kernel\\
\hline
$\{1\}$ &$\{1\}$& Total Perfect Dominating set & No Poly Kernel& No Poly Kernel& Linear kernel\\ 
 \hline
$\{0\}$ &$\mathbb{N}^*$& Independent Dominating set &  Open & Open & Poly. kernel\\
\hline
$\{0,1\}$ &$\{1\}$& Weakly Perfect Dominating set & No Poly Kernel& No Poly Kernel&Poly. kernel\\
\hline
$\mathbb{N}$ &$\{i,i+1,\dots,j\}$& [i,j]-Dominating set & No Poly Kernel &  No Poly Kernel& Poly. kernel\\ 
\hline
$\{i,i+1,\dots,j\}$ &$\{i,i+1,\dots,j\}$ & Total [i,j]-Dominating set &No Poly Kernel&No Poly Kernel& Linear kernel\\
 \hline
$\{1\}$ &$\mathbb{N}^*$& Dominating Induced Matching &  Open & Open  & Poly. kernel \\ 
\hline

\end{tabular}
\end{center}
\caption{Results proved in this paper for problems expressible as $(\sigma,\rho)$ domination. Here $\mathbb{N}^*=\{1,2, \dots \}  $. Non-existence results of Polynomial kernels assume $NP \nsubseteq coNP/poly$.}
\label{table:1}
\end{table}

Once a problem is shown to admit FPT algorithms, the next interesting question is to see if the problem admits polynomial kernels. The \sirho problem parameterized by treewidth does not admit polynomial kernels unless $NP \subseteq coNP/poly$. This result can be proved using standard lower bound techniques in kernelization. Thus, to study the existence of polynomial kernels, we can consider structural parameters which are larger than the treewidth. One such parameter is the size of the vertex cover of the graph. We also consider the parameter \dmod, a generalization of vertex cover. 
\begin{definition}
Given a graph $G(V,E)$, a \dmod{} is a subset of vertices $D$ such that $G[V\setminus D]$ is a graph of maximum degree $d$.
\end{definition}
We can see that a \dmod{} with $d=0$ is the vertex cover itself. Also, the size of a \dmod{} gets possibly smaller as the value of $d$ increases. However, the size of the degree-$d$-modulator tends to be very high on many graphs including dense graphs. Therefore,  we further consider a parameter called the neighborhood diversity of the graph whose value is smaller in dense graphs. We observe that this parameter admits more positive results. Moreover, the neighborhood diversity of a graph can be computed in polynomial time~\cite{NeighbourhoodDiversity}.

\subsection{Our Results}

We consider the $\sirho$ problem with respect to the number of elements in the sets $\sigma$ and $\rho$. We are considering the minimization variants of these problems. Hence, we assume that $0 \notin \rho$, as otherwise the minimization problem of \sirho can be trivially solved by not including any vertex in the dominating set. 
\begin{enumerate}
\item Let $\sigma$ and $\rho$ be finite sets. Then the $\sirho$ problem parameterized by the size $k$ of a \dmod{} of the graph admits a polynomial kernel, if $\forall i,i \in \sigma$ implies $ i-j \notin \rho$ and $i \in \rho$ implies $i-j \notin \sigma$, for $0 \leq j \leq d$ . \vspace{5pt}  \\This implies that, when $\sigma$ and $\rho$ are finite sets, the $\sirho$ problem parameterized by the size $k$ of the vertex cover admits a polynomial kernel if $\sigma \cap \rho \ne \emptyset$. Moreover, the next result shows that this condition is tight.
\item Let $\sigma$ and $\rho$ be sets such that $\rho$ is finite and $\sigma$ is possibly infinite. Then the \sirho problem does not admit a polynomial kernel when parameterized by the size of vertex cover, if $| \sigma \cap \rho | > 0 $. 
\\ The above results show that the $[\{0\},\{1\}]$ \textsc{Domination} problem, which is commonly known as the $\textsc{Efficient Domination}$ problem or the \textsc{Perfect Code}, admits polynomial kernels when parameterized by the vertex cover. The next result shows that this cannot be extended to \dmod{} for larger values of $d$.
\item The $[\{0\},\{1\}]$ \textsc{Domination} problem does not admit polynomial kernels when parameterized by the size of degree-$1$-modulator.
\item  Let $\sigma$ and $\rho$ be finite sets. Then the \sirho problem parameterized by the neighborhood diversity admits a linear kernel.
\item  Let $\mathbb{N}^* = \mathbb{N}\setminus \{0\}$. Let $\sigma$ and $\rho$ be such that one of them is finite and the other is $\mathbb{N}^*$. Then the \sirho problem parameterized by the neighborhood diversity admits a polynomial kernel.
\end{enumerate}

Results for specific problems that follow from the above is given in Table~\ref{table:1}.
To the best of our knowledge, these results are not known before.
\section{Preliminaries}
In this section, we give definitions and results that will be used in rest of the paper.
\noindent
All the graphs considered in this paper are simple, undirected and loopless. For $n \in \mathbb{N}$, $[n]$ denotes the set $\{1,2,\dots,n\}$.
We use the standard notations from graph theory as can be found in \cite{diestel2006graph}. 
Let $G$ be a graph. We denote the vertex set of $G$ by $V(G)$ and edge set of $G$ by $E(G)$.
\vspace{5pt}
\noindent
\\\textbf{Parameterized Complexity}\cite{cygan2015parameterized}: A \textit{parameterized problem} is a language $L \subseteq \Sigma^{*} \times \mathbb{N}$, where $\Sigma$ is a fixed, finite alphabet. For an instance $(x,k) \in \Sigma^{*} \times \mathbb{N}$, k is called the parameter. 
A parameterized problem $L \subseteq \Sigma^{*} \times \mathbb{N}$ is called \textit{Fixed Parameter Tractable}(FPT) if there exists an algorithm $A$ and computable function $f:\mathbb{N} \rightarrow \mathbb{N}$ such that given $(x,k) \in \Sigma^{*} \times \mathbb{N}$, the algorithm $A$ correctly decides whether $(x,k) \in L$ in time $f(k).|(x,k)|^{O(1)}$. An important concept in FPT algorithms is kernelization, which is defined next.
\\ \noindent 
\textbf{Kernelization}\cite{cygan2015parameterized}:
A kernelization algorithm for a parameterized problem $X$, given an instance $(I,k)$ of $X$, works in polynomial time and returns an equivalent instance $(I',k')$ of $X$, where the size of the instance $(I',k')$ is bounded by $g(k)$ for some computable function $g:\mathbb{N} \rightarrow \mathbb{N}$. If $g$ is a polynomial function, then the problem is said to admit polynomial kernels. 
\\\noindent Next we define a technique that can be used to prove the possible non-existence of polynomial kernels for a parameterized problem.
\\\noindent\textbf{OR-Cross Composition} \cite{cygan2015parameterized}:
Let $L \subseteq \Sigma^*$ be a language and $Q \subseteq \Sigma^* \times \mathbb{N}$ 
 be a parameterized language. We say that $L$ \textit{OR-cross-composes} in $Q$ if there exists a polynomial equivalence relation $\mathcal{R}$ and an algorithm $A$, called the \textit{OR-cross-composition}, satisfying the following conditions. The algorithm $A$ takes as input a sequence of strings $x_1,x_2, \dots ,x_t \in \Sigma^*$ that are equivalent with respect to $\mathcal{R}$, runs in time polynomial in  $\displaystyle\sum_{i=1}^t |x_i|$, and outputs one instance $(y,k) \in \Sigma^* \times \mathbb{N}$ such that
 \begin{itemize}
 \item $k \leq p(\displaystyle\max_{i=1}^t |x_i| + \log t)$ for some polynomial $p(.)$ and
 \item $(y,k) \in Q$ if and only if there exists at least one index i such that $x_i \in L$ 
 \end{itemize}
 \begin{theorem}\label{crossth}
 \cite{bodlaender2010cross} If an NP-Hard language $L$ OR-cross-composes into a parameterized problem $Q$, then $Q$ does not admit a (generalized) polynomial kernelization unless $NP \subseteq coNP/poly$.
 \end{theorem}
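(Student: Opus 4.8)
The plan is to combine the OR-cross-composition with a \emph{hypothetical} polynomial kernelization of $Q$ to build an OR-distillation of $L$, and then invoke the theorem of Fortnow and Santhanam that no $NP$-hard language admits an OR-distillation unless $NP \subseteq coNP/poly$. So suppose for contradiction that $Q$ admits a polynomial kernelization $K$ (the argument is the same for a generalized kernelization, i.e.\ a polynomial compression of $Q$ into some language $R$, as long as $R \in NP$; the fully general case is Drucker's theorem). Recall that an OR-distillation of $L$ is a polynomial-time algorithm which, given strings $x_1,\dots,x_t$ each of length at most $n$, outputs a single string of size $\mathrm{poly}(n)$ --- with no dependence on $t$ --- that is a yes-instance of a fixed target language if and only if some $x_i \in L$.

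First I would preprocess the tuple fed to the distillation. Deleting duplicates among $x_1,\dots,x_t$ lets us assume $t \le |\Sigma|^{n+1}$, since there are no more distinct strings of length at most $n$; hence $\log t = O(n)$. Then, using the polynomial equivalence relation $\mathcal{R}$ from the cross-composition --- whose defining properties are that equivalence is decidable in polynomial time and that strings of length $\le n$ fall into $n^{O(1)}$ classes --- I partition the instances into classes $C_1,\dots,C_m$ with $m = n^{O(1)}$. Applying the OR-cross-composition $A$ to each class $C_j$ produces an instance $(y_j,k_j)$ of $Q$ with $k_j \le p(n + \log t) = \mathrm{poly}(n)$ such that $(y_j,k_j)\in Q$ iff some $x\in C_j$ lies in $L$; running $K$ on $(y_j,k_j)$ then replaces it by an equivalent instance $(y_j',k_j')$ of total size bounded by a polynomial in $k_j$, hence by a polynomial $s(n)$ that does not depend on $t$. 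At this point $\bigvee_j [\, (y_j',k_j')\in Q \,]$ holds iff some $x_i\in L$, there are only $m = n^{O(1)}$ such instances, and each has size at most $s(n)$.

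It remains to fold these $m$ instances into one. The language ``given $m$ instances of $Q$, does at least one lie in $Q$'' is in $NP$ --- guess the index and a certificate, using $Q\in NP$ --- so by $NP$-hardness of $L$ it reduces in polynomial time to a single instance $y$ of $L$ with $|y| = \mathrm{poly}(m\cdot s(n)) = \mathrm{poly}(n)$ and $y\in L$ iff some $x_i\in L$. Since every step above is polynomial in $\sum_i |x_i|$, the composition is an OR-distillation of $L$, and Fortnow and Santhanam's theorem yields $NP\subseteq coNP/poly$.

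The crux of the argument --- and the reason every clause of the cross-composition definition is needed --- is keeping the final output size polynomial in $n$ and completely free of $t$: this is exactly what forces the duplicate-removal step (so that $\log t = O(n)$ and hence each $k_j$ stays polynomial), the $n^{O(1)}$ bound on the number of $\mathcal{R}$-classes (so that $m$ stays polynomial), and the use of the polynomial kernel (to shrink each cross-composed instance, whose raw size may be enormous, down to $\mathrm{poly}(k_j)$). The remaining work is bookkeeping: checking that the composition of all these polynomial-time stages is itself polynomial in $\sum_i|x_i|$, and that the kernel's target language supports the final reduction to $L$ --- automatic for genuine kernels, and the content of Drucker's strengthening for arbitrary compression.
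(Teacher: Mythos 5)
The paper states this theorem as a black-box citation of Bodlaender, Jansen and Kratsch and gives no proof of its own, so the baseline is the cited reference; your argument is exactly that standard proof (deduplicate, partition by the polynomial equivalence relation into $n^{O(1)}$ classes, cross-compose each class, shrink each output with the hypothetical kernel, fold the polynomially many results into one instance, and invoke Fortnow--Santhanam), and it is correct. One minor attribution point: for OR-compositions the case of a generalized kernelization into an arbitrary target language $R$ does not need Drucker's theorem --- the Fortnow--Santhanam covering argument works for any target $R$, since membership of the polynomially many relevant output strings in the complement of $R$ is simply hard-coded into the advice; Drucker's theorem is what is required for AND-cross-compositions.
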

 \vspace{5pt}
\noindent
\textbf{Neighborhood diversity}:
Two vertices $u,v$ in a graph $G(V,E)$, have the same type if and only if $N(v)\backslash \{u\}=N(u)\backslash \{v\}$.
If the graph vertices can be partitioned into at most $b$ sets, such that all the vertices in each partition have the same type then the graph is said to have \textit{neighborhood diversity} $b$. 

\noindent Now, we define the \dpoly problem over a field $F$  as follows~\cite{dpolycsp}.

\noindent
\defproblem{\dpoly}{A list $L$ of polynomial equalities over variable $V=\{ x_1, \dots , x_n\}$. An equality is of the form $f(x_1, \dots , x_n)=0$, where $f$ is a multivariate polynomial over $F$ of degree d.}{Does there exist an assignment of the variables $\tau : V \rightarrow \{0,1\}$ satisfying all equalities (over $F$) in $L$?}

\noindent We know the following result regarding the \dpoly.
\begin{theorem} ~\cite{dpolycsp}.\label{d-poly_th}
There is a polynomial-time algorithm that,
given an instance $(L, V)$ of \dpoly over an efficient field $F$, outputs an equivalent instance $(L',V)$ with at most $n^d + 1$ constraints such that $L' \subseteq L$.
\end{theorem}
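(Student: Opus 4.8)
The plan is to prove this by a linear-algebraic sparsification argument: identify each polynomial equality with a vector of coefficients, observe that these vectors span a space of dimension at most $n^d+1$, and keep only a maximal linearly independent subset of the equalities. The first step is to bring every $f\in L$ into multilinear form. Since only assignments $\tau:V\to\{0,1\}$ matter, we may replace each power $x_i^k$ with $k\ge 1$ occurring in a monomial of $f$ by $x_i$; this leaves the value of $f$ on every Boolean point unchanged, producing a multilinear polynomial $\tilde f$ of degree at most $d$ that agrees with $f$ on $\{0,1\}^n$. Each $\tilde f$ is an $F$-linear combination of the monomials $\prod_{i\in S}x_i$ with $|S|\le d$; the number $N$ of such monomials is $\sum_{j=0}^{d}\binom{n}{j}$, and an elementary estimate gives $N\le n^d+1$.

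Fix a bijection between these $N$ monomials and the coordinates of $F^{N}$, and associate to each $f\in L$ the coefficient vector $v_f\in F^N$ of $\tilde f$. Using Gaussian elimination over $F$, incrementally select a maximal linearly independent subset $L'\subseteq L$ of these vectors; since they all lie in the $N$-dimensional space $F^N$, we have $|L'|\le N\le n^d+1$, and $L'\subseteq L$ by construction. This runs in time polynomial in $n$, $N$ and $|L|$, because an \emph{efficient field} supports polynomial-time field arithmetic and zero-testing. For correctness: any $\tau$ satisfying all equalities of $L$ in particular satisfies those of $L'\subseteq L$; conversely, if $\tau\in\{0,1\}^n$ satisfies every equality in $L'$, then writing $v_f=\sum_{g\in L'}\lambda_g v_g$ for an arbitrary $f\in L$ and evaluating the corresponding polynomial identity $\tilde f=\sum_{g\in L'}\lambda_g\tilde g$ at $\tau$ gives $f(\tau)=\tilde f(\tau)=\sum_{g\in L'}\lambda_g\,\tilde g(\tau)=\sum_{g\in L'}\lambda_g\,g(\tau)=0$. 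Hence $(L',V)$ and $(L,V)$ are equivalent.

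I expect the main points needing care to be: (i) carrying out the multilinearization and fixing the monomial-to-coordinate correspondence so that the linear algebra genuinely takes place in a space of dimension $N\le n^d+1$ rather than in the exponentially large full polynomial space; and (ii) the counting estimate $\sum_{j=0}^{d}\binom{n}{j}\le n^d+1$. The remaining ingredients — Gaussian elimination over an efficient field, and the observation that a common zero of the basis polynomials is a zero of every polynomial in their span — are routine.
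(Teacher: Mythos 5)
Your argument is correct, and it is essentially the standard proof of this sparsification result: multilinearize so that the coefficient vectors live in a space of dimension $\sum_{j=0}^{d}\binom{n}{j}\le n^d+1$, extract a maximal linearly independent subset by Gaussian elimination over the efficient field, and observe that a common root of a spanning set is a root of everything in its span. Note that the paper itself gives no proof of this statement---it is imported by citation from the reference on \textsc{$d$-Polynomial Root CSP}---and your proof coincides with the argument given in that source.
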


\section{Polynomial Kernels} \label{Upper}


Let $G$ be a graph along with a \dmod{} $S \subseteq V(G)$ such that $|S| =k$.
\begin{theorem}\label{KernelThm}
Let $\sigma$ and $\rho$ be finite sets. Then the \sirho problem parameterized by the size of a \dmod{} admits a polynomial kernel if the following condition is true : $\forall i$, $i \in \sigma$ implies $i-j \notin \rho$ and $i \in \rho$ implies $ i-j \notin \sigma$, for $0 \leq j \leq d$. 
\end{theorem}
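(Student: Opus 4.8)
The plan is to design a polynomial-time reduction rule that bounds the number of vertices in $V(G) \setminus S$ while preserving the answer, and then show the resulting kernel has size polynomial in $k = |S|$. First I would observe that since $S$ is a {\dmod}, every vertex of $V(G) \setminus S$ has at most $d$ neighbors inside $V(G)\setminus S$, so the "interaction" of such a vertex with a hypothetical solution $D$ is almost entirely determined by which of its at most $\binom{k}{\le \max\sigma + \max\rho + d}$ relevant subsets of $S$ lie in $D$, together with the bounded local contribution from outside. Concretely, I would classify each vertex $v \in V(G)\setminus S$ by its "trace" on $S$ --- namely the pair $(N(v)\cap S$, and the isomorphism type of the bounded-degree component structure around $v)$. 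Since $\sigma,\rho$ are finite, only the count $|N(v)\cap D|$ matters, and this count decomposes as $|N(v)\cap S \cap D| + |N(v)\cap (V\setminus S)\cap D|$ where the second term is at most $d$.

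Next I would argue that vertices with the same trace are interchangeable up to a bounded multiplicity: if there are "too many" vertices of a given type, most of them are forced into a fixed side of the partition $(D, V\setminus D)$ or else produce a contradiction, and we can delete the redundant copies (or replace a large group by a small gadget recording the same constraint). This is where the hypothesis of the theorem --- that $i \in \sigma \Rightarrow i - j \notin \rho$ and $i \in \rho \Rightarrow i-j \notin \sigma$ for $0 \le j \le d$ --- enters crucially: it guarantees that for any fixed choice of $D \cap S$ and any fixed bounded outside-contribution $j$, a vertex $v$ of a given type can be placed on at most one side of the solution, i.e. the condition "$|N(v)\cap D| \in \sigma$" and "$|N(v)\cap D| \in \rho$" cannot simultaneously be satisfiable after accounting for the $\le d$ slack. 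This removes the combinatorial ambiguity that would otherwise prevent us from collapsing identical-type vertices, and lets us bound the number of surviving vertices of each type by a constant depending only on $d$, $\max\sigma$, $\max\rho$.

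Having bounded the number of vertices per type, the kernel size follows by bounding the number of types: the trace on $S$ is a subset of $S$ (of which there are $2^k$, which is too many), so I would instead encode the constraint algebraically and invoke Theorem~\ref{d-poly_th}. The idea is to introduce a Boolean variable $x_s$ for each $s \in S$ indicating $s \in D$, write for each vertex $v \in V\setminus S$ (after the bounded-multiplicity reduction) a polynomial equality of bounded degree over a suitable field $F$ encoding "$|N(v)\cap D|$ takes a value in $\sigma$ (if $v \in D$) or in $\rho$ (if $v \notin D$)", using the finiteness of $\sigma,\rho$ to write the "membership in a finite set" predicate as a product of linear factors (hence a polynomial of degree $\le \max(\max\sigma,\max\rho)$), and using the bounded outside-degree $d$ to keep the total degree bounded; similarly add equalities for the vertices of $S$. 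Applying Theorem~\ref{d-poly_th} to this {\dpoly} instance yields an equivalent sub-instance with at most $n^{D}+1$ constraints for a bounded exponent $D$, i.e. polynomially many in $k$ after noting $n$ is now $\text{poly}(k)$; keeping only the vertices involved in these surviving constraints (plus $S$) gives the polynomial kernel.

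The main obstacle I anticipate is the interface between the $\le d$ neighbors of $v$ that lie outside $S$: their membership in $D$ is not a free variable attached to $S$, so the polynomial encoding must either (i) also introduce variables for the bounded-degree part and argue the induced subgraph $G[V\setminus S]$ contributes only a polynomial blow-up, or (ii) pre-process $G[V\setminus S]$ component by component (each component has bounded degree, though not necessarily bounded size) and this is genuinely delicate --- bounded degree does not mean bounded size. I expect the resolution is that the theorem's hypothesis forces each low-degree component to interact with $D$ in only a very rigid way, so that after the reduction rule the components that survive have bounded size, at which point they can be folded into the algebraic encoding with bounded degree. Verifying this rigidity, and that the degree of the resulting polynomials stays bounded by a function of $\max\sigma,\max\rho,d$ only (so the $n^D$ bound is polynomial), is the technical heart of the argument.
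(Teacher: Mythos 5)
Your high-level strategy matches the paper's: encode the problem as a \dpoly{} instance over the variables for $S$ and invoke Theorem~\ref{d-poly_th}, using the hypothesis on $\sigma,\rho$ to argue that a vertex outside the modulator is forced onto one side of the solution by its neighbors in $S$. However, the step you flag as ``the technical heart'' and leave unverified is exactly the step that carries the proof, and your proposed resolutions for it (a bounded-multiplicity reduction on traces, or an analysis showing surviving components of $G[V\setminus S]$ have bounded size) are not how it is done and would not obviously work: components of a degree-$d$ graph need not have bounded size for $d\ge 2$, and the number of traces on $S$ is $2^k$, as you yourself observe.

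The missing idea is a direct algebraic elimination of the non-modulator variables. Let $P=\{i-j : i\in\rho,\,0\le j\le d\}$ and $Q=\{i-j : i\in\sigma,\,0\le j\le d\}$; the hypothesis makes $P\cap Q=\emptyset$. For any $v\notin S$ and any satisfying assignment, the sum $x=\sum_{u\in N(v)\cap S}s_u$ must lie in $P\cup Q$ (since the at most $d$ remaining neighbors contribute between $0$ and $d$), and by the disjointness it determines $s_v$: $s_v=0$ if $x\in P$ and $s_v=1$ if $x\in Q$. One then takes the Lagrange interpolation polynomial $g$ of degree $|P|+|Q|-1=(d+1)(|\sigma|+|\rho|)-1$ with $g\equiv 0$ on $P$ and $g\equiv 1$ on $Q$, and substitutes $g\bigl(\sum_{u\in N(v)\cap S}s_u\bigr)$ for \emph{every} occurrence of $s_v$ with $v\notin S$ --- including the occurrences inside the constraints of other vertices outside $S$, which is precisely what handles the $\le d$ outside-$S$ neighbors you were worried about, with no component analysis and no deletion rule at all. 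The resulting system has $k$ variables and degree bounded by a function of $|\sigma|,|\rho|,d$ only, so Theorem~\ref{d-poly_th} (in which $n$ is the number of \emph{variables}, now $k$, not the number of vertices) immediately yields $k^{O(1)}$ constraints. Without this substitution your encoding either retains $n$ variables (so the $n^{D}+1$ bound is useless) or has no way to express the side of the partition a non-modulator vertex lies on; as written, the proposal does not close this gap.
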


\begin{proof}
We will reduce the instance $(G,k)$ to an instance of the \dpoly problem, then carefully reduce the number of variables and then use Theorem \ref{d-poly_th} to get a polynomial kernel. A similar technique has been used in~\cite{3-cncf} to design polynomial kernels.

Let $\sigma, \rho $ be two finite sets. Given $(G,k)$, we create an instance $(L,V)$ of the \dpoly such that $L$ is satisfiable if and only if $G$ admits a \srds, $D$. Set $V := \{s_{v}  | v \in V(G) \}$. Here assignment $s_v = 1$ signifies that the vertex $v \in D$ and $s_v = 0$ signifies that $v \notin D$. 
\\\noindent We create $L$ over $\mathbb{Q}$  as follows. For every vertex $ v \in V(G) $, add the constraint
\begin{itemize}
    \item $ Y(1-s_v) + Z s_v = 0$, where 
    \begin{itemize} 
     \item $Y =\displaystyle\prod_{ i \in \rho} ((i - \sum_{u \in N(v)} s_u)^{2} + (s_v)) $
     \item $Z = \displaystyle\prod_{ j \in \sigma} ((j - \sum_{u \in N(v)} s_u)^{2} + (s_v - 1)) $
     \end{itemize}
%
    \end{itemize}

The constraint signifies that if $v \in D$, then the number of  neighbors of $v$ that belong to $D$ should be in $\sigma$ and  if $v \notin D$, then the number of  neighbors that belong to $D$ should be in $\rho$. Degree of the constraints is bounded by $2\max(|\rho| , |\sigma|) + 1$. 

\begin{lemma}
$(L,V)$ is a YES-instance if and only $G$ has a \srds.
\end{lemma} 

\begin{proof}
Let $\tau : V \rightarrow \{0, 1\}$ be a satisfying assignment for $(L, V )$. Vertex $v$ is added to a set $D$ if $\tau (s_{v}) = 1$ and $v$ is not added if $\tau (s_{v}) = 0$. 
Now we show that $D$ is a \srds. Let $v \in V(G)$ be an arbitrary vertex. If $\tau(s_v) = 1$, then there exists $i \in \sigma$ such that $i = \sum_{u \in N(v)} \tau(s_u)$. Hence $v$ has $i$ neighbors in $D$.  Similarly, if $\tau(s_v) =0$ then there exists $i \in \rho$ such that $v$ has $i$ neighbors in $D$. By symmetric arguments, we can show that if $G$ admits a \srds{} then $(L,V)$ admits a satisfiable assignment.
%
%
\qed
\end{proof}
Now $|V| = n$  where $n$ is the number of vertices in $G$. We  now modify $L$ so that it uses only $|S|$ variables. First, we make the following observation.

\begin{lemma} \label{su}

Let $ \tau$ be a satisfying assignment for $(L,V)$ and $v \in V(G)\setminus S$ and $ \displaystyle \sum_{u \in N(v) \cap S} \tau(s_u) = i - j$, where $j \in \{0,1,\dots d\}$. Then if $i \in \rho$, $\tau(s_v) =0$. Similarly, if $i \in \sigma$, then $\tau(s_v)=1$. 
\end{lemma}

\begin{proof}
Let $i \in \rho$ and $v$ be as given in the statement of the lemma.
Assume for contradiction that $\tau(s_v )= 1$. Since $\tau$ is a satisfying assignment, there exists $t \in \sigma$ such that $((t - \displaystyle \sum_{u \in N(v)} \tau(s_u))^{2} + (\tau(s_v) - 1)) = 0$. This means that  $|t - (i - j )| = \displaystyle\sum_{u \in N(v) \setminus S} \tau(s_u)$. Since $t \in \sigma $ and $i \in \rho$, we can see that $t - i$ is at least $d + 1$, hence $|t - (i - j )|  \geq |d + 1 + j|$.  This is a contradiction as $v$ can have at most $d$ neighbors in $V(G) \setminus S$. 
The second statement can be proved by similar arguments.
\qed
\end{proof}

\noindent Now, we introduce a new function.  For all $v \notin S$, let $f_{v} = g(x)$ where $x = \displaystyle\sum_{u \in N(v) \cap S} s_u$. The function $g(x)$ is defined as follows : 
\\\noindent Let $P = \displaystyle \bigcup_{i \in \rho, 0 \leq j \leq d} \{i-j \}$  and $Q = \displaystyle \bigcup_{i \in \sigma, 0 \leq j \leq d} \{i-j \}$. Now,
\begin{equation}
 g(x) = \left(\displaystyle \sum_{a\in P} \left(\prod_{c \in P \cup Q,c \ne a} \frac{x-c}{a-c}\right) + 2\displaystyle \sum_{ b \in Q } \left(\prod_{c \in P \cup Q,c \ne b}\frac{x-c}{b-c}\right)\right)-1 
 \end{equation}

Here $|P| = |\rho|(d+1)$ and $|Q|=|\sigma|(d+1)$. Therefore the degree of $g(x)$ is at most $ (d + 1) ((|\sigma| +  |\rho|) -  1)$.

Observe that for all $v \in V(G)\setminus S$, $f_v$ only uses variables defined for vertices that are in $S$. Let $L'$ be a set of constraints equal to $L$ with every occurrence of $s_v$, for $v \notin S$, substituted by $f_v$.

\begin{lemma}
If $\tau: V \rightarrow \{0, 1\}$ is a satisfying assignment for $(L, V )$ then $\tau_{|V'}$
is a satisfying assignment for $(L', V')$. Moreover, if $\tau'$ is a satisfying assignment for $(L', V')$, then there exists a satisfying assignment $\tau$ for $(L,V)$ such that $\tau' = \tau_{|V'}$.
\end{lemma}

\begin{proof}
Let $\tau$ be a satisfying assignment for $(L, V )$. Let $v\in V(G)\setminus S$ and $p=\displaystyle\sum_{u \in N(v) \cap S} \tau(s_u)$. It follows from Lemma~\ref{su} that if $p \in P$ then $\tau(s_v) = 0$ and else, if $p \in Q$, then $\tau(s_v) = 1$. We claim that  if $p \in P$ then $g(p) = 0$. This follows from the observation that exactly one term of the form $\frac{x-c}{a-c}$ becomes $1$ (when $x=a$) and every other term becomes $0$ (when $x=c$), while computing $g(p)$. Similarly, it can be observed that if $p \in Q$ then $g(p) = 1$. Thus $\tau_{|V'}$ is a satisfying assignment for $(L', V')$.

Now assume $\tau'$ is a satisfying assignment for $(L', V')$. We can now define a satisfying assignment $\tau$ for $(L,V)$ as follows: For $v \in V'$, set $\tau(v) = \tau'(v)$ and for $v \in V\setminus V'$, set $\tau(v) = f_v$. Now it follows from the above arguments that $\tau$ is is a satisfying assignment for $(L, V )$.

\qed
\end{proof}


We have obtained $(L',V')$ which is satisfiable if and only if $(L, V )$ is satisfiable. Observe that $|V'| = O(k)$ as it only contains variables corresponding to vertices in \dmod .\\

\noindent $L'$ has at most $k$ variables. Let $\gamma$ be degree of $f_v$ which is $((d + 1) (|\rho| + |\sigma|) - 1)$. Let $\alpha$ be $\max(|\rho| , |\sigma|) + 1$. \\

\noindent  We know that the maximum degree in $L'$ is $\gamma\alpha $ . We use Theorem \ref{d-poly_th} to obtain $L^{''} \subseteq L'$ such that $|L^{''}| = O(k^{(\gamma \alpha) + 1})$. 
\noindent To represent a single constraint, it is sufficient to  store the coefficient for each variable in $V'$. The storage space needed for a single coefficient  is $\log n$. Hence, $(L'', V')$ can be stored using $O(k^{(\gamma\alpha) + 2}\log{n})$ space. Now,  we can assume  $k \geq\log{n}$. Otherwise, if $\log{n} >k$, then the \dpoly problem is solvable in polynomial time by guessing the set of variables $s_v$ which will be set to $1$ in a satisfiable assignment. This can be done in $O(2^k)= O(n)$ time.

Therefore we conclude $(L'',V')$ can be stored using $O(k^{( \gamma\alpha)+2})$ bits. Now we have reduced the \sirho problem to an instance of \dpoly problem, whose size is bounded by a polynomial function of $k$. Since both these problems are in NP, we know that any instance of the \dpoly problem can be converted to an equivalent instance of the \sirho problem in polynomial time. This shows that \sirho parameterized by \dmod{} admits a polynomial kernel.
%
%
%
%
%
\end{proof}

\section{Lower Bounds}

In this section, we show lower bounds on the size of kernels for the $\sirho$ problem parameterized by vertex cover. These results complement the upper bounds proved in Section~\ref{Upper}.   We use the same framework as given in section 4.1 of \cite{3-cncf}.

\begin{theorem}
 The \sirho problem, where the set $\rho$ is finite, does not admit a polynomial kernel when parameterized by vertex cover, if $| \sigma \cap \rho | > 0 $,  unless $NP \subseteq coNP/poly$.
\end{theorem}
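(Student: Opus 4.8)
The plan is to prove this lower bound by an OR-cross-composition into \sirho parameterized by vertex cover and to invoke Theorem~\ref{crossth}; concretely we adapt the framework of Section~4.1 of~\cite{3-cncf}. As the source language we take a fixed NP-hard problem $\Pi$ — for the $(\sigma,\rho)$ families of interest one may take \sirho itself, which is NP-hard; otherwise a standard NP-hard problem together with a polynomial reduction folded into the composition. We fix the polynomial equivalence relation $\mathcal{R}$ so that two instances are equivalent iff they have the same number $n$ of vertices (all malformed inputs going to one junk class), and we pad the number of input instances so that $t=2^{\ell}$ for an integer $\ell=O(\log t)$. It then suffices to build, from instances $G_1,\dots,G_t$ living on the common vertex set $[n]$, a single graph $H$ together with a budget $k'$ such that $H$ is a yes-instance iff some $G_i$ is, and such that $H$ has a vertex cover of size polynomial in $n+\log t$.

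The graph $H$ consists of a small \emph{core} $C$, which will be the vertex cover, together with an independent set of \emph{verifier} vertices outside $C$. The core contains: (i) a \emph{candidate-solution gadget} on $O(n)$ vertices encoding a subset $D^{\star}\subseteq[n]$ (and its complement); (ii) an \emph{index gadget} on $O(\log t)$ vertices, with a one-out-of-two sub-gadget on each of the $\ell$ coordinate pairs, whose feasible configurations are in bijection with choices of an active instance $i^{\star}\in[t]$; and (iii) two auxiliary pools — a pool of \emph{pinned} vertices that any solution is forced to take, and a pool of \emph{flexible} vertices, each joined to exactly $r$ pinned vertices for some fixed $r\in\sigma\cap\rho$, so that a flexible vertex is legal whether or not it lies in the solution — which serve as adjustable counters. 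Outside $C$ we place, for every instance $i$ and every local check of $G_i$ (one per vertex of $G_i$), a verifier joined to the appropriate candidate-solution vertices, to the index-gadget coordinates encoding $i$, and to suitably many pinned and flexible vertices; the connection to the index gadget is chosen so that a verifier of instance $i$ receives an extra contribution that vanishes exactly when $i=i^{\star}$. Combined with the flexible counters, this is arranged so that, for the active instance, the verifiers of $G_{i^{\star}}$ enforce precisely the $[\sigma,\rho]$ condition at each vertex of $G_{i^{\star}}$, while for every inactive instance the verifiers can be made legal regardless of the candidate solution. The budget $k'$ is set to the intended size of the candidate-solution part plus the sizes of the forced pools plus the slack needed for the flexible vertices.

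For correctness, if $G_{i^{\star}}$ is a yes-instance witnessed by a \srds{} $D^{\star}$, we set the candidate-solution gadget to $D^{\star}$, the index gadget to encode $i^{\star}$, take all pinned vertices, and choose the flexible vertices and verifier memberships so as to meet every constraint: the verifiers of $G_{i^{\star}}$ are satisfied because $D^{\star}$ is valid, and those of each inactive instance are satisfied because the nonzero index contribution together with the type-agnostic flexible counters absorbs whatever the candidate solution does; one checks this stays within $k'$. Conversely, any solution of $H$ of size at most $k'$ takes the pinned pool, forces the index gadget into one of its $t$ canonical configurations defining an $i^{\star}$, and then the verifiers of $G_{i^{\star}}$ force the restriction of the solution to the candidate-solution gadget to be a valid \srds{} of $G_{i^{\star}}$. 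Finally $|C|$ is polynomial in $n$ plus $O(\log t)$, hence polynomial in $\max_i|x_i|+\log t$, so this is an OR-cross-composition and Theorem~\ref{crossth} gives the claimed lower bound under $NP\subseteq coNP/poly$.

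The crux — and the step I expect to be the main obstacle — is the design of the verifier gadget so that the \emph{exact}-count semantics of \sirho do not let an inactive instance silently impose a constraint. This is exactly where the hypothesis $\sigma\cap\rho\neq\emptyset$ is used: it is the precise negation of the kernelization condition of Theorem~\ref{KernelThm} for $d=0$, and a common value $r$ lets us build counter vertices whose legality is insensitive to their membership in the solution, which is what makes the inactive instances "free". Two further points need care: the deactivation mechanism must cost only $O(\log t)$ core vertices — in particular we must not spend one core vertex per instance, which would blow up the parameter — and each gadget (one-out-of-two enforcement on the index pairs, pinning of the forced pool, flexibility of the auxiliary pool, and the candidate-solution encoding) must be realized concretely for the given finite $\rho$ and arbitrary $\sigma$ with $0\notin\rho$, with every neighbor count verified to land in $\sigma$ on solution vertices and in $\rho$ on non-solution vertices. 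These verifications are routine but intricate and make up the bulk of the proof.
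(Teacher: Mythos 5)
Your high-level framework matches the paper's in several respects: an OR-cross-composition, a small core that will serve as the vertex cover, forced (``pinned'') vertices, and --- the one place the hypothesis enters --- ``flexible'' vertices attached to exactly $r\in\sigma\cap\rho$ forced vertices so that their membership in the solution is free. The paper uses exactly this last device (its \emph{blue} vertices built from $b=\min(\sigma\cap\rho)$ forced black neighbors). But the step you yourself flag as the crux is not merely the main obstacle; it is where the proposal as written breaks, for two reasons tied to the exact-count semantics and the finiteness of $\rho$. First, a verifier for a vertex $u$ of $G_i$ that is adjacent to the candidate-solution vertices for all of $N_{G_i}(u)$ sees a number of solution neighbors ranging over an interval of length up to $n$; since $\rho$ is finite, no fixed offset from pinned vertices and no \emph{shared} pool of flexible counters can bring this into $\rho$ for every candidate solution, while giving each verifier its own private flexible counters (which must lie in the core, since verifiers are outside it and cannot be adjacent to each other) inflates the vertex cover by a factor of $t$. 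Second, ``deactivating'' a verifier of an inactive instance by a nonzero index-gadget contribution does not work for exact-count constraints: adding $1$ (or any fixed amount) to the neighbor count does not make an arbitrary count land in the finite set $\rho$; with threshold-type constraints this trick switches a constraint off, but here it merely shifts it.

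The paper sidesteps both problems by composing from \textsc{Clique} rather than from \sirho itself. Each ``check'' is then a single potential edge $\{j,j'\}$, so the corresponding (green) connector vertex is adjacent to only two candidate grid vertices $p_{i,j},p_{i',j'}$, plus $\max(\rho)-2$ forced black vertices and a couple of flexible blue vertices; every count stays within a window of constant width and the flexible vertices can always top it up into $\rho$ --- except when both endpoints are selected \emph{and} the selector vertex $y_l$ of the chosen instance is adjacent to the connector (which happens exactly when the edge is absent from $G_l$), in which case the count exceeds $\max(\rho)$ and the configuration is infeasible. Note also that the paper does spend one selector vertex per instance, but places $Y=\{y_1,\dots,y_t\}$ as an independent set \emph{outside} the vertex cover, so the parameter stays $n^{O(1)}$ without any $O(\log t)$ bit-encoding; your worry about a per-instance cost is thus resolved differently. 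To repair your proposal you would need either to switch to a bounded-arity source problem as the paper does, or to re-encode the local \sirho checks so that each verifier touches only $O(1)$ core vertices; as it stands, the correctness argument for the inactive instances cannot be completed.
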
 \label{no_pol}
\begin{proof}
 We give an OR-cross-composition from \clq to the \sirho problem. Given a graph with $n$ vertices and an integer $k$, the \clq problem asks whether $G$ has a clique of size $k$. \clq is a well-known NP-hard problem~\cite{gareyJohnson}. 
We define a polynomial equivalence relation $R$ as follows.
Two instances of the \clq $(G_1 , k_1)$ and $(G_2 , k_2)$ are equivalent under $R$ if $|V(G_1)| = |V(G_2)|$ and $k_1 = k_2$. Let $t$ instances of \clq,  $G_1,\dots,G_t$, that are equivalent under $R$ be given. We arbitrarily label the vertices in each instance as $v_1 , \dots v_n$. Let $\rho$ be a fixed set that is finite and $\sigma$ be a fixed set that is possibly infinite such that $|\sigma \cap \rho| > 0$. Let $b = \min( \sigma \cap \rho ) $. We create an instance $G'$ of the \sirho problem that asks for the existence of a $[\sigma,\rho]$ dominating set of size at most $K$ ($K$ will be defined later).


The reduced instance contains three special sets of vertices referred to as $\emph{black}$, $\emph{green}$  and $\emph{blue}$ vertices. They are constructed using gadgets $\mathcal{C}$, $\mathcal{H}$ and $\mathcal{X}$ respectively. Intuitively, the construction of these vertices are such that a black vertex belongs to every dominating set and a green vertex does not belong to any dominating set. Similarly, a blue vertex may or may not belong to the dominating set based on how it is connected to the rest of the graph. We now describe the construction of these vertices using the corresponding gadgets.\\
\begin{figure}[!htb] 
    \centering
    \begin{minipage}{.5\textwidth}
        \centering
        \includegraphics[scale = 1]{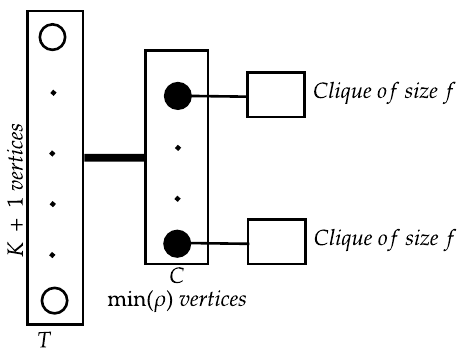}
        \caption{$\mathcal{C}$ Gadget }
        \label{c_gad}
    \end{minipage}%
    \begin{minipage}{0.5\textwidth}
        \centering
        \includegraphics[scale = 1]{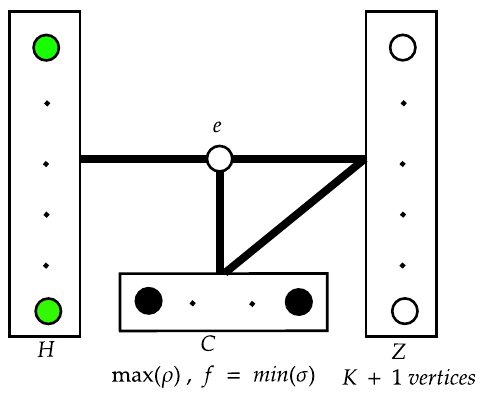}
        \caption{$\mathcal{H}$ gadget}
        
        \label{h_gad}
    \end{minipage}
\end{figure}

\noindent Construction of black vertices using the $\mathcal{C}$ gadget  : Every black vertex $v$, by virtue of its construction, has an associated bounded non-negative integer $f$, which we will refer to as the \emph{guarantee} of $v$. The significance of the value of $f$ will be explained shortly. A $\mathcal{C}$ gadget constructs black vertices in chunks of $\min (\rho)$ vertices. To construct a single chunk, we introduce a vertex set $T$ which is an independent set and $|T| = K + 1$. Similarly, let $C$ be a vertex set such that $|C| = \min(\rho)$ and $C$ is an independent set. To construct black vertices with guarantee $f$, we also add a vertex set $I$ which contains $\min(\rho)$ cliques such that each clique is of size $f$. Intuitively, the guarantee, $f$, is the number of neighbors that are added to make sure that a black vertex is dominated. Here the value of $f$ depends on where we use the black vertex. We further connect $T$ and $C$ such that $G'[T \cup C]$ is a complete bipartite graph and we connect every clique of $I$ to a unique vertex of $C$. Here the vertices in $C$ are the black vertices which are constructed. (Refer Figure~\ref{c_gad}\footnote{In all the figures, thick lines indicate all the edges across two sets are present.}).\\

We claim that all black vertices are part of every $[\sigma,\rho]$ dominating set of size at most $K$. At least one vertex $t \in T$ cannot be in the dominating set as the budget is $K$. Now $t$ needs to be dominated and as it has only $\min(\rho)$ neighbors in $C$, all the vertices of $C$ should be in the dominating set.\\

\noindent Construction of green vertices using the $\mathcal{H}$ gadget: We introduce a vertex set $Z$ where $|Z| = K + 1$ and $Z$ is an independent set, we connect every vertex of $Z$ to $\max(\rho)$ black vertices of a set $C$ which contains black vertices of guarantee $b$. We now add a vertex $e$ which is adjacent to every vertex in $Z$ and $\max(\rho)$ vertices of $C$ (same as those connected to $Z$) and to vertex set $H$. Here the vertex set $H$ is of size  at most $O(n^{2}k^{2})$  and forms an independent set. The vertices in $H$ form the constructed green vertices. (Refer Figure~\ref{h_gad}).\\

\noindent We claim that no green vertex is part of any $[\sigma,\rho]$ dominating set of size at most $K$. We observe that at least one vertex $z \in Z$ does not belong to the dominating set, as the budget is $K$. The vertex $z$ already has $\max(\rho)$ black neighbors. Hence if the vertex $e$ belongs to the dominating set then $z$ will have $\max(\rho) + 1$ neighbors in the dominating set, which is a contradiction. Hence vertex $e$ does not belong to any dominating set. Now if any green vertex of the set $H$ belongs to a dominating set then the vertex $e$ will have $\max(\rho) + 1$ neighbors in the dominating set, a contradiction. Now the claim follows.\\

\noindent Construction of blue vertices using the $\mathcal{X}$ gadget : We introduce vertices $x$ and $w$ and connect both of them to a clique of $b$ black vertices with guarantee $0$. Here $x$ is the constructed blue vertex.\\
We observe that both $x$ and $w$ are dominated irrespective of them being in the dominating set or not, by the black vertices of the clique. If $x$ is in the dominating set, the vertices of the clique are dominated, if $x$ is not in the dominating set and $\max(\rho) - 1 \notin \rho$ then to dominate the vertices in the clique, we can select $w$. We create $O(n^{2}k^{2}) $ blue vertices.    \\

As we are creating $\emph{black}$ vertices in chunks some vertices might not be connected to the rest of the construction. We connect each of such black vertices to a clique of size $b$.
Note that while describing the construction, for sake of simplicity, we will only show/refer the vertices as black, green or blue and not denote the gadgets that construct them.
Now we describe the reduction. 

We add a vertex set $P$ that contains vertices $p_{i,j}$, $i \in [k], j \in [n]$ in a grid-like fashion. These vertices will be used to select the vertices that correspond to a clique in a YES instance. Now we describe the various gadgets used in the reduction.
%
\begin{figure}
        \centering
        \includegraphics[scale = 1]{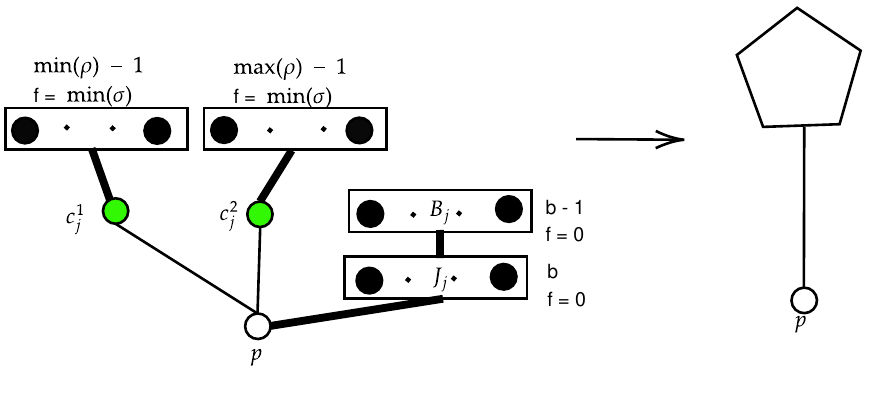}
        \caption{Row / Instance Selector Gadget }
        \label{row_gad}
\end{figure}
%
    

   \noindent \emph{Row Gadget}: For $1 \leq j \leq k$, corresponding to the $j$th row in the grid, we construct a row gadget as follows. We add two $\emph{green}$ vertices $c_{j}^{1}$ and $c_{j}^{2}$. 
    The vertex $c_{j}^{1}$ is adjacent to all the vertices, $p_{j,i}$ for $i \in [n]$, in the row and is adjacent to $\min(\rho) - 1$ black vertices each with guarantee $\min{(\sigma)}$.
    The vertex $c_{2}^{2}$ also is adjacent to all vertices in the row and has $\max(\rho) - 1$ black neighbors each with guarantee $\min{(\sigma)}$. 
    Now we add a vertex set 
    $J_j$ that contains $b$ $\emph{black}$ vertices with guarantee $0$ and a vertex set $B_j$ that contains $b - 1$ $\emph{black}$ vertices with  guarantee $0$. We add edges such that $G[B_{j} \cup J_{j}]$ is a complete bipartite graph and every vertex of $J_{j}$ is adjacent to every vertex in the row. 
\\\emph{Column Gadget}: For $1 \leq i \leq n$, we construct a column gadget for the $i$th column. We add a green vertex $u_{i}$ which is adjacent to $\max ( \rho) - 1$ black vertices with guarantee $ b$ and to all column vertices $\{p_{1,i}, \cdots, p_{k,i}\}$. We further add a blue vertex $x_i$ which is adjacent to $u_{i}$.
\\\noindent Now we can make the following claim.
\begin{claim}
A $[\sigma, \rho]$ dominating set of the reduced graph, if exists, contains exactly one vertex from each row and at most one vertex from each column in the grid.
 \end{claim}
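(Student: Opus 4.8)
The plan is to analyze the row and column gadgets separately, using the established properties of black, green, and blue vertices. Recall from the constructions above that every black vertex lies in every $[\sigma,\rho]$ dominating set of size at most $K$, and no green vertex lies in any such dominating set. Let $D$ be a $[\sigma,\rho]$ dominating set of the reduced graph with $|D| \le K$. I will first argue that $D$ contains at least one vertex $p_{j,i}$ from each row $j$, then that it contains at most one, and finally that each column contributes at most one vertex to $D$.

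For the lower bound per row, fix a row $j$ and consider the green vertex $c_j^1$. Since $c_j^1 \notin D$, it must be dominated by exactly $r$ of its neighbors for some $r \in \rho$; in particular it needs $\min(\rho)$ neighbors in $D$. Its neighbors are the $\min(\rho)-1$ black vertices attached to it (all of which are in $D$) together with the row vertices $p_{j,1},\dots,p_{j,n}$. Hence at least one $p_{j,i}$ must be in $D$, so $D$ meets every row. For the upper bound per row, I would use the second green vertex $c_j^2$: it has $\max(\rho)-1$ black neighbors (all in $D$) and is adjacent to all $n$ row vertices, so the number of row vertices in $D$ can be at most $\max(\rho)-1 - (\max(\rho)-1) \cdots$ — more precisely, $c_j^2$ already has $\max(\rho)-1$ forced neighbors in $D$, and being a non-$D$ vertex it can have at most $\max(\rho)$ neighbors in $D$, so at most one row vertex lies in $D$. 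Combined, exactly one vertex of each row is selected. (I should double check the intended roles of $J_j$ and $B_j$ here: the complete bipartite graph $G[B_j \cup J_j]$ together with the adjacency of $J_j$ to the row vertices is presumably designed so that the black vertices in $J_j$ get the right degree into $D$ regardless of which single row vertex is chosen; I would verify that this consistency condition is exactly what forces "exactly one" rather than merely "at least one".)

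For the column bound, fix a column $i$ and look at the green vertex $u_i$, which has $\max(\rho)-1$ black neighbors (in $D$) and is adjacent to all $k$ column vertices $p_{1,i},\dots,p_{k,i}$. Since $u_i \notin D$, it has at most $\max(\rho)$ neighbors in $D$, and $\max(\rho)-1$ of these are already accounted for by the black vertices, so at most one column vertex $p_{j,i}$ is in $D$. The blue vertex $x_i$ attached to $u_i$ is there only to keep $u_i$ and the gadget consistent and does not affect this count. Assembling the three bounds yields the claim.

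The main obstacle I anticipate is the interplay between $\min(\rho)$, $\max(\rho)$, and the guarantees assigned to the auxiliary black vertices: the argument that a green vertex "needs exactly this many selected neighbors" only works if the forced contribution from its black neighbors plus the variable contribution from the $p$-vertices can land in $\rho$ only for the intended counts, and this is precisely where the finiteness of $\rho$ and the specific values $\min(\rho)-1$, $\max(\rho)-1$ are used. I would need to check carefully that no degenerate case (e.g. $\min(\rho)=\max(\rho)$, or $\rho$ containing intermediate values) breaks the "exactly one / at most one" dichotomy, and that the guarantees $\min(\sigma)$ on the black vertices in the row gadget are consistent with those black vertices being in $D$ and satisfying the $\sigma$-constraint. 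The column argument is the cleanest; the row argument is the delicate one.
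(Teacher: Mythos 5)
Your proposal is correct and follows essentially the same argument as the paper: the green vertex $c_j^1$ with its $\min(\rho)-1$ forced black neighbors yields "at least one" per row, $c_j^2$ with $\max(\rho)-1$ forced black neighbors yields "at most one" per row, and $u_i$ with $\max(\rho)-1$ forced black neighbors yields "at most one" per column. The caveats you raise about $J_j$, $B_j$, and the guarantees are reasonable sanity checks but are not part of the paper's (equally brief) proof of this claim.
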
  
Let $j \in [k], i \in [n]$. To see the claim, consider the $j$th row in the grid. Since the (green) vertex $c_{j}^{1}$ in the row gadget has $\min(\rho) - 1$ black neighbors in the gadget, at least one vertex from the row should be in any $[\sigma,\rho]$ dominating set of the graph. Moreover, since the (green) vertex $c_j^2$ has $\max(\rho) - 1$ black neighbors, at most one vertex from the row can be present in any $[\sigma,\rho]$ dominating set of the graph. This implies that exactly one vertex from each row is present in any $[\sigma,\rho]$ dominating set. Similarly, for the $i$th column, since the green vertex $u_{j}$ in the column gadget has $\max(\rho) - 1$ black neighbors, at most one vertex can be selected from the $i$th column.

\noindent Now we continue our discussion of the gadgets. 
\\ \emph{Instance Selector}: We introduce a vertex set $Y$ = $\{y_1, y_2, \cdots , y_t\}$ corresponding to the $t$ instances. The vertices in $Y$ form an independent set. Intuitively, these vertices will be used to “select” the YES instance (if exists) from the $t$ instances of \clq problem. The instance selector gadget is exactly like a row gadget and it connects to $Y$ exactly like how a row gadget connects to the vertices in the row. Therefore, it follows from the previous claim that exactly one vertex from $Y$ is part of any $[\sigma,\rho]$ dominating set.
   \\\noindent\emph{Connector Gadget}: A connector gadget connects a pair of vertices from the grid to the instance selector gadget. The construction of the connector gadget depends on certain properties of the sets $\sigma$ and $\rho$, specifically whether $\max(\rho) = 1$ or $\max(\rho) > 1$ (recall that $0 \notin \rho$) and whether  $0 \in \sigma$. For every $i, i^\prime \in [k]$, $i < i' $ and $j, j^\prime \in [n]$, $j < j'$, the connector gadgets for different cases are as follows.
\begin{figure}[!htb]
    \centering
    \begin{minipage}{.5\textwidth}
        \centering
        \includegraphics[scale = 1]{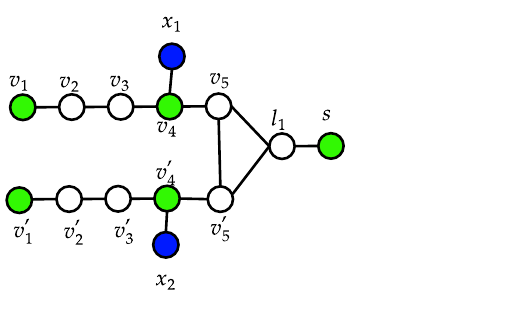}
        \caption{Connector Gadget if $\max(\rho) = 1$ and $0 \in \sigma$}
        \label{sig_0}
    \end{minipage}%
    \begin{minipage}{0.5\textwidth}
        \centering
        \includegraphics[scale = 1]{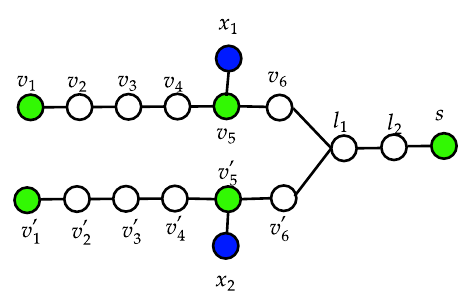}
        \caption{Connector Gadget if $\max(\rho) = 1$ and $0 \notin \sigma$}
        \label{sig_1}
    \end{minipage}
\end{figure}

   \noindent -$\max(\rho) > 1$ : We add a green vertex  $s_{i,i^\prime}^{j,j^\prime}$ and make it adjacent to $p_{i,j}$, $p_{i',j'}$,  ($\max(\rho) - 2$) black vertices with guarantee $b$ and 2 blue vertices $x_{i,i^\prime}^{j,j^\prime}$ and $l_{i,i^\prime}^{j,j^\prime}$. (Refer  Figure~\ref{complete})
    
   \noindent -$\max(\rho) = 1$ and $0 \in \sigma$ : The connector is as shown in Figure~\ref{sig_0}.
    
\noindent -$\max(\rho) = 1$ and $0 \notin \sigma$ :  The connector is as shown in Figure \ref{sig_1}.

    \noindent Finally, vertex $y_i$ has an edge to $s_{i,i^\prime}^{j,j^\prime}$ only if the edge $(i, i^\prime)$ does not exist in the graph $G_i$.
    

\begin{figure}[h]
\includegraphics[scale=0.8]{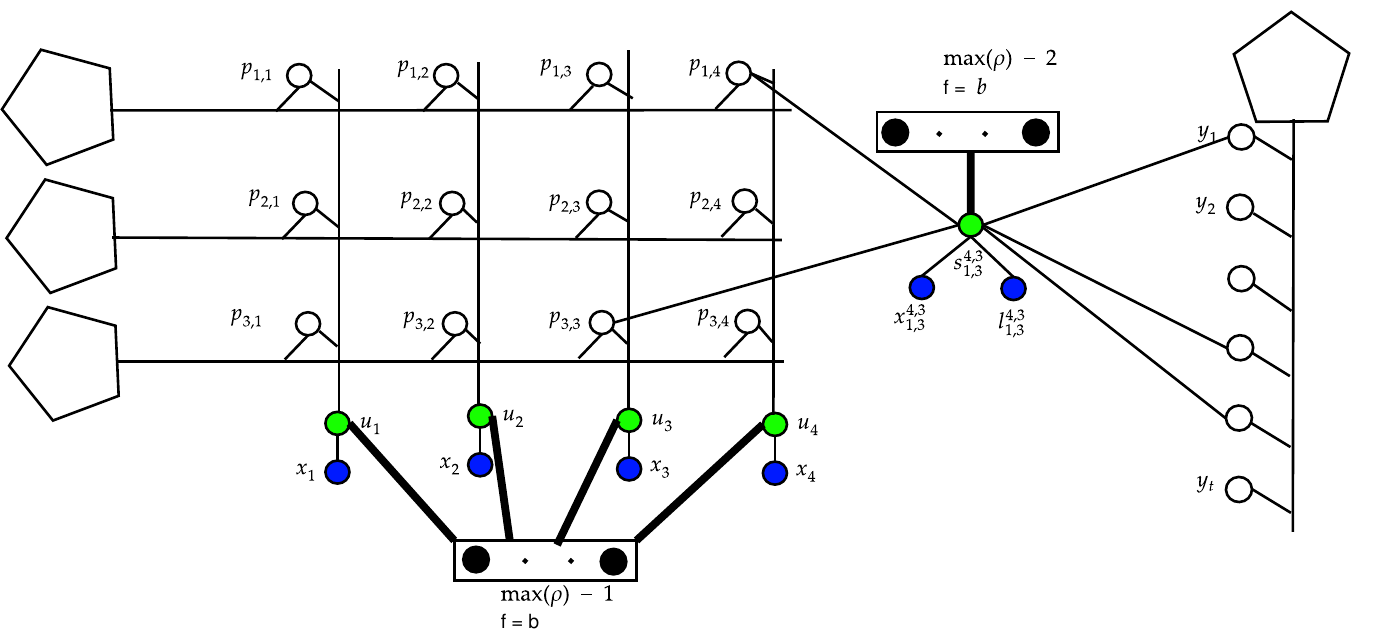}

\caption{$G$' with $n = 4$ and $k = 3$ and $\max(\rho) > 1$, In Graph 1 there is no $edge(3,4)$ hence there is an edge between $y_1$ and $s_{1,3 }^{4,3}$}
\label{complete}
\end{figure}

\noindent This concludes the description of the reduced graph, $G'$. (See  Figure~\ref{complete}).
\begin{lemma}~\label{eqvlem}
There exists a $k$-clique in one of the instances $G_i$ if and only if $G'$ has a $[\sigma, \rho]$ dominating set of size $\leq K$.
\end{lemma}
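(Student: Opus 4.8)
The plan is to prove the two directions of the biconditional separately, using throughout the structural facts already isolated in the preceding claims: in any $[\sigma,\rho]$ dominating set $D$ of $G'$ with $|D|\le K$, every black vertex lies in $D$, no green vertex lies in $D$, exactly one grid vertex is chosen from each row, at most one grid vertex is chosen from each column, and exactly one vertex of $Y$ is chosen. These facts reduce the whole statement to an argument about the grid selection and the connector gadgets.

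For the forward direction, suppose some instance $G_{\ell}$ has a $k$-clique on pairwise distinct vertices $v_{a_1},\dots,v_{a_k}$. I would take $D$ to consist of: all black vertices; the vertex $y_{\ell}$; the grid vertices $p_{r,a_r}$ for $r\in[k]$; and, for each $\mathcal{X}$ gadget and each connector gadget, the appropriate ``partner'' vertex (the $w$-type vertex) whenever that is needed to dominate the $b$-clique of black vertices attached to it and to give the blue and connector vertices a neighbourhood size inside $\rho$. One then checks the $[\sigma,\rho]$ requirement type by type. Black vertices receive exactly their guarantee many neighbours from the $I$-cliques together with the fixed contribution of the $T$/$C$ or $B_j$/$J_j$ structure, a total engineered to land in $\sigma$. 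The green vertices $c_j^1,c_j^2,u_i$, the $H$-vertices, the $Y$-side green vertices and the connectors $s_{i,i'}^{j,j'}$ see exactly $\min(\rho)$, or $\max(\rho)$, or the intended element of $\rho$, many neighbours in $D$; this uses that exactly one grid vertex per row and at most one per column is taken, that $y_{\ell}$ is taken, and --- crucially --- that when $p_{i,j},p_{i',j'}\in D$ the pair $v_jv_{j'}$ is an edge of the clique, so $y_{\ell}$ is \emph{not} adjacent to $s_{i,i'}^{j,j'}$ and its count remains at $\max(\rho)$. The selected grid vertices and $y_{\ell}$ themselves are in $D$ and, because $b=\min(\sigma\cap\rho)$ and their black neighbourhoods are sized against $b$, see a number of $D$-neighbours lying in $\sigma$. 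A final count gives $|D|\le K$.

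For the backward direction, take any $[\sigma,\rho]$ dominating set $D$ with $|D|\le K$. The unique chosen vertex $y_{\ell}\in Y$ names an instance $G_{\ell}$, and the chosen grid vertices, one per row and in pairwise distinct columns, name $k$ distinct vertices $v_{a_1},\dots,v_{a_k}$ of $G_{\ell}$, with $p_{r,a_r}\in D$. I claim these form a clique. If some $v_{a_i}v_{a_{i'}}$ with $i<i'$ were a non-edge of $G_{\ell}$, then by construction $y_{\ell}$ is adjacent to $s_{i,i'}^{a_i,a_{i'}}$; together with $p_{i,a_i}\in D$, $p_{i',a_{i'}}\in D$ and the $\max(\rho)-2$ black neighbours already in $D$, the vertex $s_{i,i'}^{a_i,a_{i'}}$ would have at least $\max(\rho)+1$ neighbours in $D$, which is impossible for a vertex outside $D$ since $\max(\rho)$ is the largest element of the finite set $\rho$. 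The analogous counts read off from Figures~\ref{sig_0} and~\ref{sig_1} dispatch the two $\max(\rho)=1$ subcases. Hence all pairs are adjacent and $G_{\ell}$ contains a $k$-clique.

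The main obstacle is the case analysis hidden behind ``the appropriate partner vertex'' and behind the three connector-gadget designs: one must verify, separately for $\max(\rho)>1$, for $\max(\rho)=1$ with $0\in\sigma$, and for $\max(\rho)=1$ with $0\notin\sigma$, that (i) every blue vertex and every stray black vertex attached to a $b$-clique can always be dominated without spending extra budget, (ii) each connector green vertex flips between an in-$\rho$ count and an out-of-$\rho$ count exactly according to whether its $y_{\ell}$-edge is present and whether its two grid neighbours are selected, and (iii) the budget $K$ --- essentially the number of black vertices plus $k$ plus one plus the forced blue/partner selections --- is large enough to admit the intended $D$ yet small enough to force all the structural claims. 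I expect the $\max(\rho)=1$ subcases, where $\rho$ leaves almost no slack, to be the fiddliest to check.
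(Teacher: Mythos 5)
Your proposal is correct and follows essentially the same route as the paper: the backward direction rests on the row/column/instance-selector structural claims plus the counting contradiction at the connector vertex $s_{i,i'}^{j,j'}$ (the paper's Lemma~\ref{edgelemma}, with the same three-way case split on $\max(\rho)$ and $0\in\sigma$), and the forward direction is the explicit construction that the paper only asserts ``by similar arguments.'' The remaining gadget-by-gadget verification and the choice of $K$ that you flag as obstacles are left at the same level of detail in the paper itself.
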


\begin{proof}


Let $D$ be a $[\sigma, \rho]$ dominating set of $G'$. We have already seen that $D$ should contain exactly one vertex from each row and at most one vertex from each column. Moreover, exactly one vertex from the set $Y$ in the instance selector, say $y_l$, is present in $D$. Let $D'$ be the indices of the set of grid vertices that are part of the dominating set $D$, i.e., $D' = \{i | p_{i,j} \in  D \}$. Now we claim that the vertices in $G_l$ corresponding to $D'$ forms a $k$-clique in $G_l$. The next lemma proves this claim. (Proof is in appendix).
%

\begin{lemma}\label{edgelemma}
If $i,i' \in D'$ then the corresponding vertices in $G_l$ have an edge between them.
\end{lemma}

It can be shown by similar arguments that if there exists a $k$-clique in one of the instances $G_l$ then  $G'$ has a $[\sigma, \rho]$ dominating set of size $\leq K$. 
\qed
\end{proof}

We set $K$ to an appropriate value, which is $n^{O(1)}$. We can see that the set $V(G') \setminus Y$ forms a vertex cover of $G'$ and $|V(G') \setminus Y| = n^{O(1)}$. Now by Theorem~\ref{crossth}, the result follows.

\qed

\end{proof}
Now we state a result for the $[\{ 0\},\{1\} ]$ Domination problem. Recall that the $[\{ 0\},\{1\} ]$ Domination problem parameterized by Degree-$0$-Modulator (i.e, the vertex cover) admits polynomial kernels, by Theorem~\ref{KernelThm}.
\begin{theorem} \label{no_pol_eff}
  The $[\{ 0\},\{1\} ]$ Domination problem, does not admit a polynomial kernel when parameterized by Degree-$1$-Modulator,  unless $NP \subseteq coNP/poly$.
 \end{theorem}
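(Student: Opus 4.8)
The plan is to give an OR-cross-composition from \clq (or from \textsc{Efficient Dominating Set} itself, restricted to a suitable graph class) into the $[\{0\},\{1\}]$ Domination problem parameterized by the size of a degree-$1$-modulator, and then invoke Theorem~\ref{crossth}. The key observation that makes a degree-$1$-modulator so much more powerful than a vertex cover is that the part outside the modulator may now contain edges (in fact a perfect matching), and a single matching edge $\{a,b\}$ behaves like a ``private switch'': in an efficient dominating set exactly one endpoint can be chosen, or neither, but never both, and if $a$ is chosen it dominates $b$ for free. This lets us encode binary choices \emph{outside} the modulator, so that the modulator size does not need to grow with the number $t$ of composed instances.

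First I would set up the polynomial equivalence relation so that all $t$ input \clq instances $G_1,\dots,G_t$ share the same vertex count $n$ and the same target $k$, and fix a labelling $v_1,\dots,v_n$ of the vertices of each. I would reuse the gadget philosophy of Theorem~\ref{no_pol}: build ``black'' vertices that are forced into every efficient dominating set (using that in $[\{0\},\{1\}]$ domination a vertex forced to be dominated with a unique neighbour forces that neighbour in), ``green'' vertices that are forced out, and a grid $P = \{p_{i,j} : i\in[k], j\in[n]\}$ together with row/column gadgets enforcing ``exactly one vertex per row, at most one per column.'' The new ingredient is the instance-selection layer: instead of an independent set $Y$ of size $t$ attached through a vertex cover (which would blow up the modulator), I would route the selection through a gadget whose ``free'' part is a collection of $O(\log t)$ matching edges lying outside the modulator — i.e. encode the index $\ell\in[t]$ of the satisfied instance in binary. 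Each connector gadget $s_{i,i'}^{j,j'}$ (present when edge $(v_i,v_{i'})$ is \emph{absent} in some $G_\ell$) must then be made ``inconsistent'' precisely when the binary-encoded index equals $\ell$ and both $p_{i,\cdot}$ and $p_{i',\cdot}$ are selected; this is the standard trick of letting a green vertex's domination count overflow past $1$ exactly in the bad configuration. The parameter $k$ of the output instance is the size of the modulator $S$, which I would take to be everything except the $O(\log t)$ matching edges and the constant-degree padding; this is bounded by a polynomial in $n$ plus $O(\log t)$, which satisfies the size requirement of an OR-cross-composition since $n = \max_i |x_i|$.

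The main obstacle, and the step I expect to spend the most care on, is the index-selection encoding: in Theorem~\ref{no_pol} the index was chosen by picking one vertex of the independent set $Y$, which is trivially a ``vertex-cover'' construction, whereas here I must realize an arbitrary $\ell\in[t]$ using only $O(\log t)$ edges outside the modulator and then make a green connector vertex ``see'' whether a given bit pattern matches a hard-coded $\ell$. Concretely this requires, for each $\ell$ and each connector $s_{i,i'}^{j,j'}$ that is live for $G_\ell$, wiring $s$ to exactly those bit-gadget vertices whose selection pattern spells out $\ell$, padded with forced-black vertices so that the total neighbour count in the efficient dominating set is $1$ unless the pattern is $\ell$ \emph{and} both grid endpoints are in — in which case it becomes $2$, killing the solution. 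One must check that this wiring keeps every vertex outside $S$ of degree at most $1$ (so the bit-gadget edges stay a matching and the $p_{i,j}$, connector, and padding vertices are isolated in $G[V\setminus S]$), that ``exactly one per row / at most one per column'' still holds with the $[\{0\},\{1\}]$ constraint, and that the forward and backward directions of the equivalence lemma go through exactly as in Lemma~\ref{edgelemma}. A secondary technicality is that $[\{0\},\{1\}]$ domination is more rigid than general $[\sigma,\rho]$ domination — there is no slack from $\min(\rho)$ versus $\max(\rho)$ — so some gadgets from Theorem~\ref{no_pol} collapse and must be rebuilt to produce a forced ``$0$ vs.\ $1$'' threshold rather than an interval threshold; this is routine but needs to be done carefully. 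Once the reduction and the equivalence lemma are in place, and $K$ is set to the (polynomial in $n$, independent of $t$) size of a canonical efficient dominating set of $G'$, Theorem~\ref{crossth} together with the NP-hardness of \clq gives the claimed kernel lower bound.
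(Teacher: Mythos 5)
Your high-level plan matches the paper's: an OR-cross-composition from \clq{} reusing the black/green/blue gadget philosophy and the grid with row/column selectors, with the new freedom of a degree-$1$-modulator exploited to keep the instance-selection machinery out of the parameter. However, the concrete mechanism you propose for instance selection --- encoding the index $\ell\in[t]$ in binary via $O(\log t)$ matching edges outside the modulator --- is both unnecessary and, as sketched, does not work under the exact-counting constraints of $[\{0\},\{1\}]$ domination. A green vertex can only distinguish ``exactly one neighbour in $D$'' from ``not exactly one''; it cannot test whether a full $\log t$-bit pattern equals a hard-coded $\ell$, because its count overflows as soon as \emph{two} of the wired bit-vertices are selected, regardless of whether the remaining bits match. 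Worse, a single connector $s_{i,i'}^{j,j'}$ is typically ``live'' for many instances $\ell$ simultaneously (the edge $(v_j,v_{j'})$ may be absent in many $G_\ell$), so wiring it to the union of their bit patterns destroys any pattern detection; the alternative of one sub-gadget per pair $(\ell,\,i,i',j,j')$ reintroduces $\Theta(t)$ many vertices of degree $>1$ outside any small modulator, defeating the purpose. You flag this step as the main obstacle but do not resolve it, and I do not see how to resolve it along the lines you describe.

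The paper sidesteps all of this with a much simpler observation: keep the \emph{unary} instance selector $Y=\{y_1,\dots,y_t\}$ exactly as in Theorem~\ref{no_pol}, but give each $y_i$ a pendant partner $y'_i$, so that $Y\cup Y'$ induces a perfect matching and can be left entirely \emph{outside} the modulator. A single green vertex $h$ (inside the modulator) adjacent to all of $Y$ forces exactly one $y_\ell$ into the dominating set, the pendants $y'_i$ dominate the unselected $y_i$'s, and each connector is simply adjacent to the relevant $y_\ell$'s as before. The modulator $V(G')\setminus(Y\cup Y')$ then has size $O(n^2k^2)$, independent of $t$, which already satisfies the cross-composition bound --- no logarithmic encoding is needed. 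Your proposal's remaining ingredients (the rebuilt forced-black gadget via two degree-$1$ leaves, the rigid $0$-vs-$1$ thresholds, the equivalence lemma via the connector of Figure~\ref{sig_1}) are consistent with the paper, but without a working instance selector the composition is incomplete.
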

 The proof is given in appendix. 
\section{Neighborhood Diversity}
We state our results on \sirho parameterized by the neighborhood diversity. The proofs are in appendix.
\begin{theorem}
The $[\sigma,\rho]$ domination problem parameterized by neighborhood diversity, for finite $\sigma$ and $\rho$, admits a linear kernel.
\end{theorem}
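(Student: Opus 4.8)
The plan is to exploit the structure imposed by bounded neighborhood diversity: the vertex set of $G$ partitions into at most $b$ types $V_1,\dots,V_b$, where each $V_\ell$ is either a clique or an independent set, and any two types are either completely adjacent or completely non-adjacent. The key observation is that for the \sirho problem with finite $\sigma$ and $\rho$, all that matters about a vertex $v$ is how many of its neighbors lie in $D$, and this count is determined entirely by, for each type $V_\ell$, the number of vertices of $V_\ell$ that are placed in $D$ (plus a correction of $0$ or $1$ depending on whether $v$ itself is in $V_\ell$ and whether $V_\ell$ is a clique). Since $\sigma$ and $\rho$ are finite, write $M = \max(\sigma \cup \rho) + 1$. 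Then from any type $V_\ell$ it never helps to select more than $M$ vertices into $D$: if $|D \cap V_\ell| \ge M$ then every vertex outside $V_\ell$ adjacent to $V_\ell$ already has more than $\max(\sigma\cup\rho)$ neighbors in $D$ and cannot be satisfied, and the same holds for vertices inside $V_\ell$ when $V_\ell$ is a clique. So any valid solution picks at most $M$ vertices from each type.

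The reduction rule is therefore: if some type $V_\ell$ has $|V_\ell| > M$, delete arbitrary vertices from $V_\ell$ until exactly $M + c$ remain, where $c$ is a small additive constant (I will take $c$ large enough — say $M$ — to be safe, so we keep $2M$ vertices; the exact constant only affects the multiplicative factor in the linear kernel). First I would argue safeness: the vertices of a type are interchangeable, so given a solution $D$ in $G$ with $|D \cap V_\ell| = a \le M$, we can realize the same "profile" $(|D \cap V_1|, \dots, |D \cap V_b|)$ in the reduced graph $G'$ by choosing any $a$ vertices of the surviving copies of $V_\ell$, and every satisfaction condition depends only on these profile numbers, so $D$ maps to a valid solution of $G'$ and vice versa. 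One subtlety to handle carefully is the within-type conditions: a vertex $v \in V_\ell$ with $V_\ell$ a clique of size $s$ has $|N(v)\cap D| = (|D\cap V_\ell| - [v\in D]) + \sum_{\ell' \ne \ell, V_{\ell'}\text{ adjacent to }V_\ell} |D\cap V_{\ell'}|$; if $V_\ell$ is independent the first term is just $\sum$ over adjacent types. So I must keep enough vertices of each type that both "$v\in D$" and "$v\notin D$" roles can be witnessed when needed — keeping $\ge 2$ vertices (in fact we keep $\Omega(M)$) suffices, and a vertex not in $D$ always survives because we never delete so many as to empty the "outside" part of a type that a solution uses.

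After exhaustively applying this rule, every type has at most $2M$ vertices, so $|V(G')| \le 2Mb = O(b)$ since $M$ is a constant depending only on the fixed sets $\sigma,\rho$; this is the claimed linear kernel. I would finish by noting the rule is clearly computable in polynomial time (neighborhood diversity and the type partition are computable in polynomial time by~\cite{NeighbourhoodDiversity}), and the parameter $b$ of the reduced instance does not increase. The main obstacle I anticipate is the bookkeeping around the within-clique/within-independent-set contribution and making sure that, after deletion, both "$v$ is selected" and "$v$ is unselected" scenarios for a vertex of each type remain achievable — i.e., choosing the constant $c$ and writing the profile-transfer argument so that it is simultaneously correct for clique types, independent types, and the self-loop-in-counting edge cases; everything else is routine.
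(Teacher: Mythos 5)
Your proposal is correct and follows essentially the same route as the paper: bound $|D \cap V_\ell|$ per type by a constant derived from $\max(\sigma\cup\rho)$ (using connectivity), truncate each type to a constant number of vertices, and transfer solutions via the profile $(|D\cap V_1|,\dots,|D\cap V_b|)$, checking the deleted vertices' condition through a surviving unselected representative of the same type. The paper keeps $\max\{s,r\}+1$ vertices per type rather than your $2M$, but this only changes the constant in the linear bound.
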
 
\begin{theorem}
 Let $\mathbb{N}^* = \mathbb{N}\setminus \{0\}$. The \sirho problem parameterized by neighborhood diversity admits a polynomial kernel, when $\sigma$ is finite and $\rho$ is $\mathbb{N}^*$. Also, the \sirho problem parameterized by neighborhood diversity admits a polynomial kernel, when $\rho$ is finite and $\sigma$ is $\mathbb{N}$.
 
 \end{theorem}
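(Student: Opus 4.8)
The plan is to reduce the instance, once more, to a problem about integer profiles over the type classes. First I would compute a neighborhood-diversity partition $V_1,\dots,V_b$ of $G$ in polynomial time using~\cite{NeighbourhoodDiversity}. Writing $c_i=1$ if $V_i$ induces a clique and $c_i=0$ otherwise, and letting $N(i)$ be the set of classes completely joined to $V_i$, a vertex of $D\cap V_i$ sees $c_i(d_i-1)+\sum_{j\in N(i)}d_j$ vertices of $D$, where $d_i=|D\cap V_i|$, while a vertex of $V_i\setminus D$ sees $c_id_i+\sum_{j\in N(i)}d_j$ of them. Hence the feasibility of $D$ depends only on the vector $(d_1,\dots,d_b)$, the sizes $|V_i|$, and the class-level adjacency pattern: the instance is a YES-instance (for the minimization variant with budget $K$) iff there is an integer vector with $0\le d_i\le|V_i|$ and $\sum_i d_i\le K$ such that for every $i$ with $d_i>0$ we have $c_i(d_i-1)+\sum_{j\in N(i)}d_j\in\sigma$, and for every $i$ with $d_i<|V_i|$ we have $c_id_i+\sum_{j\in N(i)}d_j\in\rho$.

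Let $F\in\{\sigma,\rho\}$ be the finite set among the two and $M=\max F$; the other set is $\mathbb{N}^*$ or $\mathbb{N}$, so its constraint is either ``$\ge 1$'' or vacuous. The key structural observation is that whenever the $F$-constraint of a class $V_i$ is active, the quantity $\sum_{j\in N(i)}d_j$ is at most $M$, and moreover $d_i\le M+1$ when $c_i=1$. Consequently, in any solution of weight at most $K$, a class can have $d_i$ large only when $V_i$ is an independent class whose $F$-constraint does not involve $d_i$, and for such a class the only genuinely large useful value is $d_i=|V_i|$: taking a few extra vertices of such a class can never help any constraint (its own constraint ignores $d_i$, and a neighbour participating in an $F$-constraint can demand at most $M$), it only costs budget. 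I would then introduce the reduction rule: replace every type class $V_i$ with more than $p(b)$ vertices (for a fixed polynomial $p$, on the order of $b(M+2)$) by a class of exactly $p(b)$ vertices of the same type and adjacency, and decrease $K$ by $|V_i|-p(b)$. Equivalence is proved by translating profiles back and forth, restoring the deleted vertices to $D$ precisely when the surrogate class is taken fully into $D$; the bounds above ensure that every constraint evaluates identically on both sides, because an ``over-full'' surrogate cannot overshoot a neighbour's $F$-constraint (that neighbour would then already be saturated and hence not selected). Applying the rule exhaustively leaves every class with at most $p(b)$ vertices and lets $K$ be capped at the remaining vertex count, giving a kernel with $O(b^2)$ vertices. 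The symmetric case ($\rho$ finite, $\sigma=\mathbb{N}$) follows by the same argument with the roles of the in- and out-constraints swapped, using that $\sigma=\mathbb{N}$ makes every in-constraint vacuous.

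The main obstacle is exactly this budget bookkeeping around classes that are taken fully into $D$. Unlike the finite--finite case, where every $d_i$ is bounded by a constant and one may truncate classes without touching $K$ (yielding a linear kernel), here a large independent class can legitimately be placed entirely in $D$; truncating it changes the weight of precisely those solutions that use it fully while leaving all other solutions untouched, so $K$ cannot be adjusted uniformly. The crux is to argue that the ``fully selected'' option behaves coherently under truncation: if selecting a large class fully is ever advantageous (beyond the bounded extent a neighbour might require), then all its neighbours are forced out of $D$ and are dominated solely through this class, so the exact surplus size is irrelevant to every constraint and only its cost $|V_i|$ matters. Making this precise across all solution profiles is the technical heart of the proof, and it is what forces the per-class bound to depend on $b$, producing a polynomial rather than a linear kernel.
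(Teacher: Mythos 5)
Your structural analysis is sound and matches the paper's: the paper proves exactly your dichotomy (for each type $V_i$, either $|V_i\cap D|$ is bounded by the maximum of the finite set, or $V_i\subseteq D$), via a minimality argument in the $\rho$-finite case and directly in the $\sigma$-finite case. The gap is in the kernelization step. Your reduction rule truncates a large class to $p(b)$ vertices and decreases $K$ by $|V_i|-p(b)$ \emph{uniformly}, but this breaks the forward direction of the equivalence: if the original graph has a solution $D$ of size $\le K$ that does \emph{not} take the large class fully (it uses, say, $d_i\le M$ of its vertices, or none), then the truncated solution has the same size $|D|$, which can exceed the new budget $K-(|V_i|-p(b))$. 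You notice this yourself (``$K$ cannot be adjusted uniformly'') and then propose, as the ``technical heart,'' an argument that the fully-selected option behaves coherently -- but that argument only addresses solutions that \emph{do} select the class fully; it does not repair the budget for solutions that do not, so the rule as stated is simply not safe. No choice of the threshold $p(b)$ fixes this, because the cost of a class in a solution is not a function of the truncated instance alone: it is either ``small'' or ``all of $|V_i|$,'' and that choice varies across solutions.

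The paper circumvents exactly this obstacle by routing through a \textsc{Weighted} $[\sigma,\rho]$ domination problem: each truncated class keeps $r+1$ (resp.\ $s+1$) vertices, of which all but one get weight $1$ and the last gets weight $t_i-r$ (resp.\ $(t_i-s)+1$), so a solution pays the true original cost only when it selects the heavy vertex, i.e., only when it corresponds to taking the class fully. The instance is then encoded in $O(b^2)$ bits (weights need at most $b$ bits each, since otherwise $n\le 2^b$ fails and the $O((r+1)^b n^{O(1)})$ brute force is already polynomial), and a many-one reduction from the weighted to the unweighted problem (both in NP) yields the kernel. If you want to keep your profile-based presentation, you need some analogue of this weighted intermediate step -- or a gadget simulating a weight-$(t_i-r)$ vertex -- before the truncation becomes a valid reduction rule.
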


\nocite{*}
\bibliography{sample}
\bibliographystyle{unsrt} 
\newpage  
\section*{Appendix}
\section*{ Neighborhood Diversity}

In this section, we discuss the existence of polynomial kernels for the \sirho problem parameterized by the neighborhood diversity, under different conditions on the sets $\sigma$ and $\rho$.
Consider a connected graph $G=(V,E)$ with neighborhood diversity $b$. 
 Let $V = V_1 \uplus V_2 \uplus \dots \uplus V_b$ be the type partition of $V$.


We consider the special case of the $[\sigma, \rho]$ domination problem where the sets $\sigma$ and $\rho$ are finite and bounded by constants $s$ and $r$ respectively. We call this problem as the \textsc{Bounded $[\sigma,\rho]$ domination} problem. Now if $D$ is a $[\sigma,\rho]$ dominating set of a graph $G$, then every vertex of $G$ has at most $\max\{s,r\}$ neighbors in $D$.
\begin{theorem}
The \textsc{Bounded $[\sigma,\rho]$ domination} problem parameterized by neighborhood diversity admits a linear kernel of size $(max(s,r)+1)b$ .

\end{theorem}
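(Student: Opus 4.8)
The plan is to build the kernel around one reduction rule. Set $t:=\max(s,r)$; the rule is: \emph{while some type class has more than $t+1$ vertices, delete an arbitrary vertex of that class}. First I would record the structure forced by neighborhood diversity: with the type partition $V=V_1\uplus\cdots\uplus V_b$, each $V_i$ is a clique or an independent set, and any two classes are completely joined or completely non-adjacent. One case must be pulled out at the start: if $G$ is a complete graph then $b=1$ and the instance is solvable in constant time — a $[\sigma,\rho]$-dominating set of $K_n$ is determined by its size $d\in\{1,\dots,n\}$, which must satisfy $d-1\in\sigma$ and, if $d<n$, also $d\in\rho$ — so there the kernelizer just outputs a fixed equivalent instance of size $\le t+1$. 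Otherwise $G$ is connected and not complete, and then every class $V_i$ is completely joined to at least one other class.

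Next I would prove a ``counts suffice'' lemma: a $[\sigma,\rho]$-dominating set $D$ interacts with class $V_i$ only through $n_i:=|D\cap V_i|$. Setting $e_i:=\sum_{j\,:\,V_j\text{ joined to }V_i}n_j$, a vertex $v\in V_i$ has $|N(v)\cap D|$ equal to $e_i+n_i-1$ if $V_i$ is a clique and $v\in D$, to $e_i+n_i$ if $V_i$ is a clique and $v\notin D$, and to $e_i$ if $V_i$ is independent. Hence a tuple $(n_1,\dots,n_b)$ with $0\le n_i\le|V_i|$ is realizable by some $[\sigma,\rho]$-dominating set exactly when, for every $i$, the inequality $n_i\ge1$ forces the value attained by a vertex of $D\cap V_i$ to lie in $\sigma$, and $n_i<|V_i|$ forces the value attained by a vertex of $V_i\setminus D$ to lie in $\rho$ (one direction reads the counts off $D$; the other picks an arbitrary $n_i$-subset of each $V_i$, which works since the constraints depend on the tuple alone). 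So the yes/no answer — and the minimum solution size, if the problem carries a budget $K$ — depends only on the family of realizable tuples.

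The heart of the argument is the count bound: every vertex of a $[\sigma,\rho]$-dominating set $D$ has at most $s$ neighbors in $D$ and every vertex outside has at most $r$, hence at most $t$ in either case; taking a vertex $w$ in a class joined to $V_i$, which is adjacent to all of $V_i$, gives $n_i\le|N(w)\cap D|\le t$ for every class $V_i$. Thus every realizable tuple has all entries at most $t$, so the inequality $n_i<|V_i|$ holds automatically whenever $|V_i|\ge t+1$. Consequently, deleting one vertex from a class $V_i$ of size $N>t+1$ changes nothing: the constraints for $j\ne i$ are untouched, and the only change for class $i$ — replacing $n_i<N$ by $n_i<N-1$ — is irrelevant since every realizable tuple has $n_i\le t<N-1$. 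The reduced graph is again connected and not complete (removing a vertex from a class that still has $\ge2$ vertices preserves both), so the rule iterates, and each step yields an equivalent instance; a budget $K$ is carried over unchanged (and may be capped at $(t+1)b$).

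Once the rule is exhausted, every class has at most $t+1$ vertices, and since deleting vertices never increases neighborhood diversity there are at most $b$ classes, giving at most $(t+1)b=(\max(s,r)+1)b$ vertices; the whole procedure (compute the type partition, then at most $n$ deletions) runs in polynomial time. I expect the main obstacle to be exactly why $t+1$ rather than $t$ is the correct per-class bound: with only $t$ vertices kept, a pruned clique-class could be taken entirely into $D$ and thereby create solutions absent from the original instance, so one genuinely has to (i) peel off complete graphs and solve them directly, and (ii) for every other connected graph use the joined-neighbor argument to certify that a class trimmed to $t+1$ vertices is never fully selected by any $[\sigma,\rho]$-dominating set.
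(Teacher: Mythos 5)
Your proposal is correct and follows essentially the same route as the paper's proof: trim each type class down to $\max(s,r)+1$ vertices and use a vertex in a completely joined neighboring class to bound $|D\cap V_i|$ by $\max(s,r)$, so that counts per class transfer in both directions. Your explicit carve-out of the complete-graph case ($b=1$) is a genuine improvement rather than mere pedantry: the paper's assertion that ``there exists one vertex adjacent to all vertices in $V_i\cap D$'' silently assumes a vertex \emph{outside} $V_i$, which is exactly what fails for $K_n$, where the trimmed clique could be selected in its entirety and create spurious solutions.
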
 

\begin{proof}
Let $G'(V',E')$ be the graph obtained from $G$ by deleting all but $\max\{s,r\}+1$ vertices from each type. (If any type contains less than $\max\{s,r\}+1$ vertices, keep all of them).  We will show that $G'$ admits a $[\sigma,\rho]$ dominating set of size $k$ if and only if $G$ admits a $[\sigma,\rho]$ dominating set of size $k$.

Let $D$ be a $[\sigma,\rho]$ dominating set of $G$ of size $k$. For $1 \leq i \leq b$, $|V_i \cap D| \leq \max\{s,r\}$. Otherwise, since $G$ is connected, there exists one vertex that is adjacent to all vertices in $V_i \cap D$ and this vertex has more than $\max\{s,r\}+1$ neighbors in $D$. Then $D' \subseteq V'$ such that $|D' \cap V_i| = |D \cap V_i|$ for $1 \leq i \leq b$ is a $[\sigma,\rho]$ dominating set of $G'$. 

Now assume that $D'$ is a $[\sigma,\rho]$ dominating set of $G'$ of size $k$. For $1 \leq i \leq b$, $|V_i \cap D'| \leq \max\{s,r\}$. Now we can show that $D'$ is a $[\sigma, \rho]$ dominating set for $G$ as well. Let $v$ be an arbitrary vertex in $V\setminus V(G')$ that belongs to type $V_i$. $V_i \cap V(G')$ has at least one vertex $u$ which is not in $D'$. Then $u$ has exactly $i$ neighbors in $D'$ for some $i \in \rho$.  Then $v$ also has exactly $i$ neighbors in $D'$, since open neighborhoods of $u$ and $v$ are the same.
\qed
\end{proof}

Now we consider the \sirho problem parameterized by neighborhood diversity where $\rho$ is bounded by a constant $r$ and $\sigma$ is $\mathbb{N}$. First we make the following observation.
\begin{lemma}\label{lm1}
Let $D$ be a minimal $[\sigma,\rho]$ dominating set. For $1 \leq i \leq b$, either $|V_i \cap D| \leq r$ or $|V_i \cap D| = |V_i|$
\end{lemma}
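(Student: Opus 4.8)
The plan is to exploit the structure of type classes together with the minimality of $D$, distinguishing the two possible kinds of a type. First I would recall that in a type partition each $V_i$ induces either a clique or an independent set, and that in the present regime ($\sigma = \mathbb{N}$) the $\sigma$-constraints are vacuous, so the only active requirement is that every $v \notin D$ satisfies $|N(v)\cap D| \in \rho$, and in particular $|N(v)\cap D| \le r$. I would then argue by contradiction: suppose some type $V_i$ has $r < |V_i\cap D| < |V_i|$, and fix a vertex $v \in V_i\setminus D$.

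The clique case is immediate and uses no minimality: if $V_i$ induces a clique then $v$ is adjacent to every other vertex of $V_i$, hence to all of $V_i\cap D$, so $|N(v)\cap D| \ge |V_i\cap D| > r \ge \max\rho$, contradicting the fact that $v\notin D$ is legal.

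The independent case is where the real work lies, and I expect it to be the main obstacle, since here one genuinely needs minimality. I would first note that all vertices of an independent type $V_i$ have one and the same open neighborhood $N := N(v)$, which is disjoint from $V_i$; consequently every vertex lying outside $V_i$ is adjacent either to all of $V_i$ or to none of it. If some vertex $u\in N$ were outside $D$, then $|N(u)\cap D| \ge |V_i\cap D| > r \ge \max\rho$, which is impossible; hence $N\subseteq D$. Now pick any $w\in V_i\cap D$ (one exists, since $|V_i\cap D| > r \ge 0$ and $\rho\ne\emptyset$ together with $0\notin\rho$ force $r\ge 1$, or, in the degenerate case $N=\emptyset$, a direct contradiction with $0\notin\rho$ already). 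I then claim $D\setminus\{w\}$ is still a $[\sigma,\rho]$ dominating set, contradicting minimality of $D$: the $\sigma$-conditions remain vacuous; for every $u\notin D$ we have $w\notin N(u)$ (otherwise $u\in N\subseteq D$), so $|N(u)\cap(D\setminus\{w\})| = |N(u)\cap D| \in \rho$; and for $w$ itself, since $N(w)=N\subseteq D$ and $w\notin N$, we get $|N(w)\cap(D\setminus\{w\})| = |N|$, which equals $|N(v)\cap D|$ and hence lies in $\rho$ because $v\notin D$ and $N(v)=N$.

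Both cases produce a contradiction, so no type can have $r < |V_i\cap D| < |V_i|$, which is exactly the stated dichotomy. The only delicate bookkeeping is the independent-type sub-case, where the equal-neighborhood observation and the removal-of-$w$ argument must be combined carefully; the clique case and the accounting for neighbors outside $V_i$ are routine.
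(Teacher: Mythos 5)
Your proof is correct and follows essentially the same route as the paper's: both arguments observe that if $r < |V_i\cap D| < |V_i|$ then the entire (common) neighborhood of $V_i$ must lie in $D$, and then contradict minimality by deleting vertices of $V_i\cap D$ whose new neighborhood count in $D$ equals that of an undominated-side witness $v\in V_i\setminus D$ and hence lies in $\rho$. The only cosmetic differences are that you remove a single vertex $w$ where the paper removes all of $V_i\cap D$, and you treat the clique case as a separate immediate contradiction rather than folding it into the ``all neighbors must be in $D$'' step; your version is in fact the more carefully argued of the two.
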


\begin{proof}
For contradiction, assume that $ \exists i$ such that $ r < |V_i \cap D| < |V_i|$. Since more than $r$ vertices of $V_i$ are in $D$ then all its neighbors should also be in the dominating set $D$ since every vertex not in $D$ can have at most $r$ neighbors in $D$. Therefore vertices in $V_i$ can have at most $r$ neighbors outside $V_i$, otherwise a vertex $u \in V_i \setminus D$ has more than $r$ neighbors in $D$. Now, $D' = D\setminus (V_i \cap D)$ is also a $[\sigma, \rho]$ dominating set. This contradicts that $D$ is minimal.
\qed    
\end{proof}
Note that this result implies that there exists an algorithm that runs in time $O((r+1)^bn^{O(1)})$ to solve the $\sirho$ problem. For all $1\leq i \leq b$, guess $k_i \in \{0,1,2,\dots,r,t_i\}$ such that $|V_i \cap D| =k_i$, and check if it is a valid $[\sigma,\rho]$ dominating set.  

\begin{theorem}
The \sirho problem parameterized by neighborhood diversity admits a polynomial kernel, when $\rho$ is bounded by the constant $r$ and $\sigma$ is $\mathbb{N}$.
\end{theorem}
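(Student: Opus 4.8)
The plan is to adapt, more or less verbatim, the strategy behind Theorem~\ref{KernelThm}: reduce the instance to a \dpoly instance with only $O(b)$ variables and then invoke Theorem~\ref{d-poly_th}. Throughout, write $r=\max\rho$ (so $\rho\subseteq\{1,\dots,r\}$ is finite), let $V_1\uplus\cdots\uplus V_b$ be the type partition with $t_i=|V_i|$, and recall from Lemma~\ref{lm1} that a minimal $[\sigma,\rho]$ dominating set $D$ satisfies $|V_i\cap D|\in\{0,1,\dots,r\}\cup\{t_i\}$ for every $i$. Since $\sigma=\mathbb{N}$ the whole set $V$ is always a valid $[\sigma,\rho]$ dominating set, so I may assume the budget satisfies $K<n$ (otherwise output a trivial yes-instance). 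Here I am treating the minimization variant, i.e.\ the question is whether a valid dominating set of size at most $K$ exists; this is the source of the only real difficulty.

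First I would build a \dpoly instance over $\mathbb{Q}$ equivalent to ``$G$ has a $[\sigma,\rho]$ dominating set of size $\le K$''. Because $\sigma=\mathbb{N}$ places no restriction on vertices inside $D$, for each vertex $v$ the only requirement is that $v\notin D \Rightarrow |N(v)\cap D|\in\rho$, so I would use one variable $s_v$ per vertex with the constraint $(1-s_v)\prod_{\ell\in\rho}\bigl(\ell-\sum_{u\in N(v)}s_u\bigr)^2=0$ (degree $2|\rho|+1$), and encode the budget with $O(\log K)$ fresh $0/1$ slack variables $z_0,\dots,z_{\lfloor\log K\rfloor}$ and the degree-$1$ constraint $\sum_v s_v+\sum_h 2^h z_h=K$ (this is solvable exactly when $\sum_v s_v\le K$).

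The crux is to collapse the $n$ vertex variables down to $O(|\rho|\,b)$. All vertices of a type $V_i$ are interchangeable, so every constraint above depends on the $s_v$ only through the type-sums $m_i:=\sum_{u\in V_i}s_u=|V_i\cap D|$, and a vertex of $V_i$ lies outside $D$ exactly when $m_i<t_i$. Hence, using Lemma~\ref{lm1} together with the fact that a valid set of size $\le K$ exists iff a minimal such set exists, I would replace the variables of each type $V_i$ by a one-hot encoding of $m_i$ over its only relevant values $\{0,1,\dots,r,t_i\}$ ($\le r+2$ fresh $0/1$ variables per type plus a degree-$1$ ``exactly one'' constraint), and rewrite the per-vertex constraints as a single per-type constraint $(1-\phi_i)\prod_{\ell\in\rho}\bigl(\ell-(\sum_{j:\,V_j\sim V_i}m_j+\epsilon_i m_i)\bigr)^2=0$, where $\phi_i$ is the $0/1$ indicator of ``$m_i=t_i$'', $\epsilon_i\in\{0,1\}$ records whether $V_i$ is a clique type, and $t_i$ enters only as a (possibly large) constant coefficient. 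A short forward/backward argument then shows this \dpoly instance, now with $N=O(|\rho|\,b+\log K)=O(|\rho|\,b+\log n)$ variables and degree $d=2|\rho|+1$, is equivalent to the original.

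Finally I would apply Theorem~\ref{d-poly_th} to reduce to $|L''|\le N^{d}+1$ constraints, each storable with its $O(N^{d})$ coefficients of $O(|\rho|\log n)$ bits (coefficients are bounded by $n^{O(|\rho|)}$), and dispose of the leftover $\log n$ factor exactly as in Theorem~\ref{KernelThm}: if $b\ge\log n$ the stored instance is already polynomial in $b$ (for fixed $\rho$); if $b<\log n$ then $N=O(\log n)$, so all $2^{N}=n^{O(1)}$ Boolean assignments can be tested in polynomial time and a trivial equivalent instance emitted. Re-encoding the resulting small \dpoly instance as a \sirho instance (both are in NP and \sirho is NP-hard) gives the kernel. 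I expect the budget to be the only genuine obstacle: the naive ``keep a constant number of vertices per type'' rule, which yields a linear kernel in the bounded case, is unsound here because a large type may be forced entirely into $D$ and then its size affects the cost; passing the budget through the \dpoly encoding without reintroducing a dependence on $\log K$ (handled via the $b$ vs.\ $\log n$ dichotomy) is the delicate point, and it is precisely what makes the kernel polynomial rather than linear.
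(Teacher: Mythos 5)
Your proposal is correct in substance but takes a genuinely different route from the paper. The paper's proof is much lighter: it keeps only $r+1$ vertices per type, reduces to a \textsc{Weighted} \sirho{} instance by giving $r$ of them weight $1$ and the last one weight $t_i-r$, argues equivalence directly from Lemma~\ref{lm1}, observes that each weight needs at most $b$ bits (otherwise $n>2^b$ and the $O((r+1)^b n^{O(1)})$ brute force is already polynomial), and so stores the whole instance in $O(b^2)$ bits before invoking an (unproved) many-one reduction from the weighted back to the unweighted problem. You instead route everything through \dpoly{}: one-hot encoding of the type counts $m_i\in\{0,\dots,r,t_i\}$, per-type degree-$(2|\rho|+1)$ constraints, binary slack variables for the budget, Theorem~\ref{d-poly_th}, and the same $b$ versus $\log n$ dichotomy to absorb the $\log n$-bit coefficients. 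Both arguments hinge on exactly the same two ingredients --- Lemma~\ref{lm1} and the ``parameter below $\log n$ implies polynomial-time solvable'' escape hatch --- and both end with the same hand-waved re-encoding into \sirho{}, so neither is more rigorous there. What the paper's route buys is simplicity and a tighter $O(b^2)$-bit kernel; what yours buys is uniformity with Theorem~\ref{KernelThm} and an explicit treatment of the budget (which the paper silently folds into the weight function), at the cost of a larger $b^{O(|\rho|)}$ kernel. Two small remarks: your invocation of Theorem~\ref{d-poly_th} is actually superfluous, since after collapsing to per-type constraints you already have only $O(b)$ constraints and the only remaining obstacle is coefficient size, which your dichotomy handles anyway; and your reduction ``valid set of size $\le K$ exists iff a minimal one exists'' should be stated as ``every valid set contains a minimal valid subset,'' which holds by iterated removal and is what lets you apply Lemma~\ref{lm1}.
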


\begin{proof}
We reduce $(G,b)$ to an instance $(G',b)$ of the \textsc{Weighted} $\sirho$ problem.

Given a graph $G(V,E)$ with a weight function $w : V\rightarrow \mathbb{N}$, the \textsc{Weighted} $\sirho$ problem is to find a $[\sigma,\rho]$ dominating set of minimum total weight.

Let $G'$ be the graph obtained from $G$ by deleting all but $r+1$ vertices from each type partition. 
In a type $V_i$, arbitrarily assign weight $1$ to $r$ vertices and assign a weight of $t_i -r$ to the remaining vertex, where $t_i= |V_i|$, for $1 \leq i \leq d$. 
If $|V_i| \leq r$ then assign weight $1$ to all the vertices of $V_i$.    

By Lemma~\ref{lm1}, we can see that $G$ admits a $[\sigma,\rho]$ dominating set of size $k$ if and only if $G'$ admits a $[\sigma,\rho]$ dominating set of total weight $k$. 
\begin{lemma}\label{lm2}
The reduced instance $G'$ can be stored in space $O(b^2)$. 
\end{lemma}
\begin{proof}
For $1 \leq i \leq b$, $|V_i| \leq r+1$ and maximum weight assigned to a vertex in $V_i$ is $t_i$. There are constant number of vertices in each type and the total number of vertices in $G'$ is at most $(r+1)b$. Now we claim that weight of every vertex can be represented using at most $b$ bits. Otherwise, $t_i > 2^b$ which implies $n >2^b$. Now the algorithm with running time $O((r+1)^bn^{O(1)})$ is $O(n^{O(1)})$ and the problem can be solved in polynomial time. 

%
 \noindent Hence, we can store the reduced instance $G'$ in $b^2$ bits. 
 \qed
\end{proof}

There exists a polynomial time many-one reduction from the \textsc{Weighted} $\sirho$ problem to the $\sirho$ problem. The result follows.

\end{proof}
%

Now we consider the \sirho problem parameterized by neighborhood diversity, when $\sigma$ is bounded by the constant $s$ and $\rho$ is $\mathbb{N}^*$. We start by proving the following result.
\begin{lemma} \label{lem2}
Let $D$ be a $[\sigma,\rho]$ dominating set. For $1 \leq i \leq b$, either $|V_i \cap D| \leq s+1$ or $|V_i \cap D| = |V_i|$.
\end{lemma}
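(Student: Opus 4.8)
The plan is to use the module structure of type classes. Under neighborhood diversity, each $V_i$ induces either a clique or an independent set, and any two vertices of the same type share the same open neighborhood outside $V_i$; moreover, a vertex adjacent to one vertex of $V_i$ is adjacent to all of them (in the independent case) or to all of them except possibly itself (in the clique case). I would split the proof along this dichotomy, and may assume $|V_i| > s+1$, since otherwise $|V_i \cap D| \le |V_i| \le s+1$ and we are done.

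\emph{Clique case.} If $V_i$ is a clique and $u \in V_i \cap D$, then $u$ is adjacent to every other vertex of $V_i$, so $|N(u) \cap D| \ge |V_i \cap D| - 1$. Since $u \in D$ and $\sigma \subseteq \{0,1,\dots,s\}$, we have $|N(u) \cap D| \le s$, hence $|V_i \cap D| \le s+1$. (If $V_i \cap D = \emptyset$ this bound is immediate.) So for a clique type the first alternative of the lemma holds unconditionally.

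\emph{Independent case.} Suppose, for contradiction, that $s+1 < |V_i \cap D| < |V_i|$. Then we may fix $u \in V_i \cap D$ and $v \in V_i \setminus D$; since $V_i$ is an independent type, $N(u) = N(v)$. Every $w \in N(u)$ is adjacent to all of $V_i$, so $|N(w) \cap D| \ge |V_i \cap D| > s+1 > s$. Hence $w \notin D$: otherwise $w \in D$ would force $|N(w) \cap D| \le s$, a contradiction. Thus $N(u) \cap D = \emptyset$, and therefore $|N(v) \cap D| = |N(u) \cap D| = 0$. But $v \notin D$, so $|N(v) \cap D| \in \rho = \mathbb{N}^*$, forcing $|N(v) \cap D| \ge 1$ --- a contradiction. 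So no such $i$ exists, which yields the claim for independent types.

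The argument is short, and I do not expect a genuine obstacle here; the only care needed is the module bookkeeping --- that same-type vertices in an independent class have identical open neighborhoods, and that a common neighbor of $V_i$ sees every vertex of $V_i$, so its $D$-degree is at least $|V_i \cap D|$. The conceptual crux is the ``squeeze'': once more than $s+1$ vertices of $V_i$ lie in $D$, every neighbor of $V_i$ is over-saturated and hence excluded from $D$, which strands every vertex of $V_i \setminus D$ with no neighbor in $D$ at all, contradicting $\rho = \mathbb{N}^*$.
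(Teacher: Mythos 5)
Your proof is correct and follows essentially the same route as the paper's: rule out the clique case by saturation of a vertex of $V_i\cap D$ against $\sigma\subseteq\{0,\dots,s\}$, and in the independent case show every common neighbor of $V_i$ is excluded from $D$, stranding the vertices of $V_i\setminus D$ against $\rho=\mathbb{N}^*$. If anything, your clique case is stated more carefully than the paper's (which derives the contradiction from the vertex $u\notin D$, where the saturation argument really applies to a vertex of $V_i\cap D$).
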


\begin{proof}
For contradiction, assume that $ \exists i$ such that $ s+1 < |V_i \cap D| <|V_i|$. Let $u,v \in V_i$ such that $u \notin D$ and $v \in D$. Clearly, this is not possible if the type induces a clique because then $u$ has $s+1$ neighbors in $D$.  Assume, $V_i$ is an independent set. Since $ s+1 < |V_i \cap D|$, none of its neighbors are in $D$. 
This is a contradiction since $u$ does not have a neighbor in $D$.    
\end{proof}

\begin{theorem}
The \sirho problem parameterized by neighborhood diversity admits a polynomial kernel, when $\sigma$ is bounded by a constant $s$ and $\rho$ is $\mathbb{N}^*$.
\end{theorem}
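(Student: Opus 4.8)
The plan is to mirror the proof of the preceding theorem (the case $\rho$ bounded, $\sigma=\mathbb{N}$): reduce $(G,b)$ in polynomial time to an equivalent instance $(G',b)$ of the \textsc{Weighted} \sirho problem in which $G'$ retains only a constant number of vertices from each type, show that $G'$ together with its weight function is encodable in $O(b^2)$ bits, and finish by composing with the polynomial-time many--one reduction from \textsc{Weighted} \sirho to \sirho. The structural fact playing the role of Lemma~\ref{lm1} is now Lemma~\ref{lem2}: in any $[\sigma,\rho]$ dominating set $D$ and for every type $V_i$, either $|V_i\cap D|\le s+1$ or $|V_i\cap D|=|V_i|$. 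As in the remark following Lemma~\ref{lm1}, this also yields a brute-force algorithm running in time $O((s+3)^b\,n^{O(1)})$ (for each type guess one of the values $0,1,\dots,s+1,t_i$ for $|V_i\cap D|$), which I will invoke to dispose of the regime $\log n>b$.

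Concretely, I would build $G'$ as follows, using that every type $V_i$ is either a clique or an independent set. If $t_i:=|V_i|\le s+1$, keep all of $V_i$ with unit weights. If $t_i>s+1$ and $V_i$ is a clique, keep only $s+1$ vertices of $V_i$ with unit weights and no ``heavy'' vertex: a large clique type can never be fully contained in a valid $[\sigma,\rho]$ dominating set, since a selected vertex of it would already have at least $s+1>s$ neighbours inside the type, so only the values $0,\dots,s+1$ for $|V_i\cap D|$ are relevant and $s+1$ retained vertices suffice to realise them. If $t_i>s+1$ and $V_i$ is independent, keep $s+1$ vertices of unit weight together with one extra vertex of weight $t_i-(s+1)$, so that selecting all $s+2$ retained vertices simulates ``$V_i\subseteq D$''. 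One then argues the equivalence ``$(G,b)$ has a $[\sigma,\rho]$ dominating set of size $k$ iff $(G',b)$ has a weighted one of total weight $k$'' by pushing a solution $D$ of $G$ forward via Lemma~\ref{lem2} (in each type select the corresponding number of unit vertices, or all retained vertices in the full independent case) and pulling a weighted solution of $G'$ back the same way. The point is that all vertices of a type share the same neighbourhood outside the type, so the number of selected neighbours a vertex receives from a collapsed type is preserved as far as the distinction ``$0$'' versus ``at least $1$'' ($=\rho$) is concerned, while every exact ($\sigma$-type) constraint involves at most $s$ selected neighbours and is therefore already witnessed among the retained vertices.

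The remaining steps are bookkeeping: $G'$ has $O((s+2)b)=O(b)$ vertices, and a single weight can be as large as $t_i\le n$; but if $\log n>b$ the $O((s+3)^b n^{O(1)})$ algorithm above runs in polynomial time, so we may assume $b\ge\log n$, whence every weight fits in $b$ bits and $(G',w)$ is encodable in $O(b^2)$ bits. Feeding this to the polynomial-time many--one reduction from \textsc{Weighted} \sirho to \sirho, and using that both problems are in NP to rebound the instance size by a polynomial in $b$, yields the polynomial kernel.

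The step I expect to be the main obstacle is proving that the collapse introduces no spurious solutions, i.e.\ that every weighted $[\sigma,\rho]$ dominating set of $G'$ pulls back to a genuine $[\sigma,\rho]$ dominating set of $G$ of the same size, checked in all vertex roles (inside the collapsed type, its neighbours, and everything else). The delicate point is the full-type branch of Lemma~\ref{lem2}: a weighted solution of $G'$ that selects the heavy vertex of a large independent type together with only part of its unit-weight block would pull back to a type-count strictly between $s+1$ and $t_i$, which Lemma~\ref{lem2} forbids, so such configurations must be shown infeasible in $G'$. This is immediate when $\sigma\cap\mathbb{N}^{*}=\emptyset$ (a ``mixed'' selection inside the retained block is then already infeasible, since all its vertices have the same count of selected neighbours), and when $\sigma$ also contains a positive integer it will likely require either wiring the retained block of a large independent type so that the heavy vertex can be taken only when the entire block is taken, or restricting attention to minimal solutions and invoking (a strengthening of) Lemma~\ref{lem2}. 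Pinning down this gadget precisely and re-verifying the $[\sigma,\rho]$ constraints on both sides is the crux.
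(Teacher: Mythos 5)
Your proposal follows essentially the same route as the paper: collapse each type to $O(s)$ vertices with one heavy-weight vertex, invoke Lemma~\ref{lem2}, pass to the \textsc{Weighted} \sirho{} problem, encode the result in $O(b^2)$ bits via the $b \ge \log n$ case split, and reduce back to \sirho{}. The ``mixed selection'' subtlety you flag in the pull-back direction is genuine and is not addressed by the paper's very terse proof, but it can be resolved without new gadgetry by noting that the heavy vertex and the unit-weight vertices of a type are twins in $G'$, so any weighted solution containing the heavy vertex together with only part of the unit block can be exchanged for a strictly cheaper all-unit selection of the same cardinality.
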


\begin{proof}
We reduce $(G,b)$ to an instance of the \textsc{Weighted} \sirho problem.

Let $G'$ be the graph obtained from $G$ by deleting all but $s$ vertices from each $V_i$. Arbitrarily assign weight $1$ to $s+1$ vertices and assign a weight of $(t_i-s)+1$ to the remaining vertex, where $t_i= |V_i|$, for $1 \leq i \leq b$. If $|V_i|\leq s+1$ then we assign weight $1$ to all the vertices.    

Now the rest of the proof follows from Lemma ~\ref{lem2} using similar arguments as before.

%
%
%
\qed
\end{proof}

\section*{Proof of Lemma~\ref{edgelemma}}


We prove this for the corresponding connector in each of the three cases.


 \noindent \emph{$\max(\rho) > 1$ :}  We observe that the $\emph{green}$ vertex $s_{j,j^\prime}^{i,i^\prime}$ already has $\max(\rho) - 2$ $ \emph{black}$ neighbors.  If  $p_{i,j}$ and $p_{i',j'} \in D$, then $s_{j,j^\prime}^{i,i^\prime}$ has $\max(\rho)$ neighbors in $D$.  If $y_l \in D $ and $y_l$ has an edge to $s_{j,j^\prime}^{i,i^\prime}$ then it has $\max(\rho) + 1$ neighbors in $D$,  a contradiction.
  
\noindent  \emph{ $\max(\rho) = 1$ and $0 \in \sigma$ :} We observe that the vertices $v_3$ and $ v'_3$ belong to $D$ in order to dominate 
  $v_2$ and $ v'_2$ respectively 
  and $ l_1 \in D$ to dominate $v_5$ and $v'_5$. 
  Now $s_{j,j^\prime}^{i,i^\prime}$ will have 2 neighbors in $D$ if it is adjacent to $y_l$, a contradiction as $\max(\rho) = 1$.
  
 \noindent \emph{$\max(\rho) = 1$ and $0 \notin \sigma$}  : Since $|\sigma \cap \rho | > 0$, it follows that $1 \in \rho$ . Since $v_1$ has a neighbor in $D$, $v_4 \in D$ to dominate $v_3$ and $v_6 \notin D$. Similarly  $v'_6 \notin D$. Now to dominate $l_1$, we have to select $l_2$ in $D$.  Since $s_{j,j^\prime}^{i,i^\prime}$ already has a neighbor in $D$ it is not adjacent to $y_l$ if $y_l \in D$.

\section*{Proof of Theorem~\ref{no_pol_eff}}
 
 Here, we modify the reduction given in the proof of Theorem \ref{no_pol}.
 The construction of $\mathcal{C}$, $\mathcal{H}$, $\mathcal{X}$ gadgets are now modified as follows:
 \vspace{3pt}\\
 
 \noindent Construction of black vertices:  We introduce two vertices, $t_1$ and $t_2$, of degree $1$ and connect it to a vertex $c$. 
  Now $c$ is a $black$ vertex and is part of any dominating set. This follows from the fact that both $t_1$ and $t_2$ cannot be in the dominating set.
 
%
 \noindent Construction of green vertices : We introduce a vertex $e$ and connect it to a black vertex $c$. A vertex $h$ is connected to $e$. Now, $h$ is a green vertex and cannot be part of any dominating set, otherwise $e$ has two neighbors in the dominating set.
 
%
 \noindent Construction of blue vertices : We introduce two vertices $x$ and $l$ and connect them. Here the degree of $l$ in $G'$ is one. 
  We observe that if $x$ does not belong to the dominating set then vertex $l$ can be selected to dominate it, otherwise $l$ is dominated by $x$.\\
 \\We now describe the complete reduction.
 \begin{itemize}

 \item \emph{Instance Selector}: We introduce vertex sets $Y = \{y_1, y_2, \cdots , y_t\}$ and $Y' = \{y'_1, y'_2, \cdots , y'_t\}$ which form two independent sets. Every $y_i \in Y$ is connected to $y'_i \in Y'$. The vertices of $Y$ correspond to the $t$ instances. These vertices will be used to “select” the $G_i$ that is a YES instance.
    
    We introduce a $\emph{green}$ vertex $h$ and connect it to every vertex in the set $Y$.
    
     \item \emph{Grid :}  We add a vertex set $P$ and $B$. For $1 \leq i \leq k$ and $1 \leq j \leq n$ we add vertices $p_{i,j}$ and $b_{i,j}$ and connect them.  The vertices of the vertex set $P$ will be used to select the vertices that correspond to a clique in a YES instance. A vertex $b_{i,j} \in B$ will be colored if the corresponding vertex in $P$ is colored.
    
    \item \emph{Column Gadget}: For $1 \leq i \leq n$, we add 2 vertices  $u_{i}$ and $x_i$.  Here $u_i$ is a $\emph{green}$ vertex and is adjacent to all vertices in $\{p_{1,i}, \cdots, p_{k,i} , x_i\}$. Also, $x_i$ is a $\emph{blue}$ vertex.
    
    \item \emph{Row Gadget}: For $1 \leq j \leq k$, we add a $\emph{green}$ vertex $r_{j}$ such that 
    the vertex $r_{j}$ is adjacent to all the  vertices of the $j$th row, i.e., $p_{j,i}$ for $i \in [n]$. 
    
     \item \emph{Connector Gadget}: For every $i, i^\prime \in [k]$, $i < i' $ and $j, j^\prime \in [n]$, $j < j'$, the connector is as shown in Figure~\ref{sig_1}.
     
     \noindent Finally, vertex $y_i$ has an edge to $s_{i,i^\prime}^{j,j^\prime}$ only if the $edge(i, i^\prime)$ does not exist in the graph $G_i$.
     
      \end{itemize}
   It is easy to see that $V(G') \setminus (Y \cup Y')$ is a degree-$1$-modulator of size $O(n^2k^2)$. The next lemma completes the proof. 
      
\begin{lemma}~\label{eqvlem}
There exists a $k$-clique in one of the instances $G_i$ if and only if $G'$ has a $[\{0\}, \{1\}]$ dominating set .
\end{lemma}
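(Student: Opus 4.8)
The plan is to reuse the skeleton of the equivalence proof for Theorem~\ref{no_pol} (together with the case analysis of Lemma~\ref{edgelemma}), specialised to $\sigma=\{0\}$, $\rho=\{1\}$, where a $[\sigma,\rho]$ dominating set is exactly an \emph{efficient dominating set}: a set $D$ in which no vertex of $D$ has a neighbour in $D$ and every vertex outside $D$ has exactly one neighbour in $D$. First I would check that the three colour classes behave as designed in the new, simplified gadgets. (i) Every black vertex $c$ lies in every efficient dominating set $D$: its two degree-one neighbours $t_1,t_2$ must both be dominated, and if $c\notin D$ then $t_1,t_2\in D$, giving $c$ two neighbours in $D$; hence $c\in D$ and then $t_1,t_2\notin D$. (ii) No green vertex $h$ lies in any $D$: the private vertex $e$ adjacent to $h$ is also adjacent to a black vertex, so $e$ is already dominated, and $h\in D$ would over-dominate $e$; consequently $h$ must receive its unique $D$-neighbour from among the vertices it is joined to in the reduction proper. (iii) Every blue vertex $x$ may be put in or out of $D$ freely, its degree-one partner $l$ dominating it when $x\notin D$ and being dominated by $x$ otherwise.

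Next I would pin down the shape of an arbitrary efficient dominating set $D$ of $G'$. The green row vertex $r_j$ is outside $D$ and, apart from its private $e$, is joined only to $p_{j,1},\dots,p_{j,n}$; as it needs exactly one $D$-neighbour, exactly one grid vertex $p_{j,\pi(j)}$ of row $j$ lies in $D$. The green column vertex $u_i$ is outside $D$ and joined to $p_{1,i},\dots,p_{k,i},x_i$, so at most one grid vertex of column $i$ lies in $D$; hence $\pi:[k]\to[n]$ is injective. The green vertex joined to all of $Y$ forces exactly one $y_l\in D$, the matching partners $y'_{l'}$ absorbing the remaining $y_{l'}$'s, and $b_{i,j}\in D$ exactly when $p_{i,j}\notin D$. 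Finally, for each pair of rows $i<i'$ I would analyse the connector vertex $s$ joining the selected grid vertices $p_{i,\pi(i)}$ and $p_{i',\pi(i')}$ exactly as in the three cases of Lemma~\ref{edgelemma}: counting the neighbours of $s$ that lie in $D$ — its fixed black neighbours, the two selected grid vertices, and the blue/degree-one vertices inside the gadget forced into $D$ to dominate their own private neighbours — shows that $s$ already has its full quota of $D$-neighbours, so $s$ cannot also be adjacent to $y_l$, which by construction means the edge between $v_{\pi(i)}$ and $v_{\pi(i')}$ is present in $G_l$. Ranging over all pairs shows $\{v_{\pi(1)},\dots,v_{\pi(k)}\}$ is a $k$-clique of $G_l$.

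For the converse, given a $k$-clique $v_{a_1},\dots,v_{a_k}$ of some $G_l$ with $a_1<\dots<a_k$, I would build $D$ by taking all black vertices, the grid vertices $p_{i,a_i}$, the vertices $b_{i,j}$ for $j\ne a_i$, the instance vertex $y_l$ together with the partners $y'_{l'}$ for $l'\ne l$, the column blue vertices $x_i$ for the columns missed by $\pi$, and for each connector exactly the blue/degree-one vertices dictated by its figure so that every green vertex of a row, column or connector gadget ends up with exactly one $D$-neighbour. I would then verify the efficiency condition vertex by vertex: black vertices have no $D$-neighbour (black vertices are pairwise non-adjacent, and their $t_i$'s and incident green/blue gadget vertices are outside $D$); green vertices have exactly one, where for connector vertices this uses precisely that, all clique edges being present, $y_l$ is non-adjacent to the relevant $s$; a grid vertex in $D$ has no $D$-neighbour (its neighbours are green gadget vertices and its own $b_{i,j}\notin D$), while a grid vertex outside $D$ is dominated by its $b_{i,j}$; the blue vertices and the $Y\cup Y'$ vertices are handled by (iii) and by the matching. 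Hence $D$ is a $[\{0\},\{1\}]$ dominating set.

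The main obstacle will be the connector gadget. Since $\max(\rho)=1$, a green vertex tolerates only a single $D$-neighbour, so the generic ``green vertex with $\max(\rho)-2$ black neighbours'' construction of Theorem~\ref{no_pol} is unavailable; the adjacencies of $p_{i,\pi(i)}$, $p_{i',\pi(i')}$ and $y_l$ to $s$ must instead be mediated by the auxiliary blue and degree-one vertices of Figure~\ref{sig_1}, chosen so that the implication ``both selected grid vertices are in $D\Rightarrow$ the edge is present'' survives and so that none of the mediating vertices ever forces a second $D$-neighbour on a black or grid vertex. Verifying this consistency in both directions is the delicate part; the remaining bookkeeping is identical to that of Theorem~\ref{no_pol}. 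It then only remains to note, as in the construction, that $V(G')\setminus(Y\cup Y')$ is a degree-$1$-modulator of $G'$ — deleting it leaves the perfect matching on $Y\cup Y'$ — of size $O(n^2k^2)$, polynomial in $\max_i|x_i|+\log t$, so Theorem~\ref{crossth} yields that $[\{0\},\{1\}]$ domination parameterized by the size of a degree-$1$-modulator admits no polynomial kernel unless $NP\subseteq coNP/poly$, completing the proof of Theorem~\ref{no_pol_eff}.
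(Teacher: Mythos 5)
Your proposal takes essentially the same route as the paper: establish the behaviour of the black/green/blue gadgets, deduce that any efficient dominating set picks exactly one grid vertex per row, at most one per column, and exactly one $y_l$, and then use the connector of Figure~\ref{sig_1} to force the corresponding edge of $G_l$, with the converse by explicit construction. The only differences are cosmetic: your converse is spelled out in more detail than the paper's ``similar arguments,'' and you need not run through all three connector cases of Lemma~\ref{edgelemma}, since this reduction uses only the single connector of Figure~\ref{sig_1}.
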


\begin{proof}
Let $D$ be a $[\{0\} ,\{ 1\}]$ dominating set of $G'$. We can see that for a given $j \in [k]$, exactly one vertex from the $j$th row belongs to $D$. This is because the vertex $r_{j}$ is green and $N(r_j) = \{p_{j,1} , \dots , p_{j,k} \} $. 
 We observe that if $p_{j,i}$ is not in the dominating set, the vertex $b_{j,i}$ is in the dominating set to dominate $p_{j,i}$.  

For a given $i \in [n]$, at most one vertex from each column belongs to $D$. This is due to the presence of the green vertex $u_{i}$ such that $N(u_i) = \{p_{1,i} , \dots , p_{k,i} \}$. If none of the column vertices are selected in $D$ and the vertex $x_i$ is in $D$ to dominate $u_{i}$. By similar arguments as given for the row gadget, for a given $i \in [t]$, exactly one vertex from $Y$ belongs to $D$,  Without loss of generality, let it be $y_{l}$. Then the $k$ vertices in $G_l$ that correspond to the subset of grid vertices $T = \{i | p_{i,j} \in  D \}$ forms a $k$-clique in $G_l$. We shall prove this.

\begin{lemma}
If $i,i' \in T$ then the corresponding vertices in $G_l$ have an edge between them.
\end{lemma}

\begin{proof}

Since $v_1$ has a neighbor in $D$, $v_4 \in D$ to dominate $v_3$ and $v_6 \notin D$. Similarly  $v'_6 \notin D$. Now to dominate $l_1$, we have to select $l_2$ in $D$.  Since $s_{j,j^\prime}^{i,i^\prime}$ already has a neighbor in $D$ it is not adjacent to $y_l$, if $y_l \in D$.
\qed
\end{proof}

\noindent It can be shown by similar arguments that if there exists a $k$-clique in one of the instances $G_i$ then  $G'$ has a $[\{0\} ,\{1\}]$ Dominating set. Now Lemma~\ref{eqvlem} follows.\\
\qed
\end{proof}
\noindent We can see that the set $V(G') \setminus (Y \cup Y')$ forms a Degree 1 Modulator of $G'$ and $|V(G') \setminus (Y \cup Y')| = n^{O(1)}$ . 
Now by Theorem~\ref{crossth}, the result follows.

\end{document}